\newcommand{\absl}[1]{\left \lvert #1 \right \rvert}
\newcommand{\asplusplus}{\textsc{rand-sampling}\xspace}
\newcommand{\bigo}[1]{\mathcal{O}\left ( #1 \right )}
\newcommand{\binarysearch}{\textsc{binary-search}\xspace}
\newcommand{\expect}[2]{\mathbb{E}_{#1}  \left[ \ #2 \ \right]}
\newcommand{\fantom}{\mbox{\textsc{fantom}\xspace}}
\newcommand{\greedy}{\textsc{repeatedGreedy}\xspace}
\newcommand{\samplegreedy}{\textsc{sampleGreedy}\xspace}
\newcommand{\fastSGS}{\textsc{FastSGS}\xspace}
\newcommand{\ignore}[1]{}
\newcommand{\ind}[1]{\mathcal{I} #1}
\newcommand{\kas}{\textsc{rep-sampling}\xspace}
\newcommand{\opt}{\textsc{opt}\xspace}
\newcommand{\pr}[2]{\mathsf{Pr}_{#1} \left ( #2 \right )}
\newcommand{\rndseq}{\textsc{rand-sequence}\xspace}
\newcommand{\simGreedy}{\textsc{simultaneousGreedys}\xspace}
\newcommand{\unif}{\textsc{unif-sampling}\xspace}
\DeclareMathOperator*{\argmax}{arg\,max}
\newtheorem{problem}{Problem}
\newtheorem{proposition}{Proposition}
\newtheorem{definition}{Definition}
\newtheorem{theorem}{Theorem}
\newtheorem{lemma}{Lemma}
\begin{document}

\twocolumn[

 \runningtitle{Adaptive Sampling for Fast Constrained Maximization of Submodular Functions}

\aistatstitle{Adaptive Sampling for Fast Constrained Maximization\\of Submodular Functions}

\aistatsauthor{Francesco Quinzan \And Vanja Dosko\v{c} \And Andreas G\"{o}bel \And Tobias Friedrich}

\aistatsaddress{Hasso Plattner Institute\\Potsdam, Germany \And Hasso Plattner Institute\\Potsdam, Germany \And Hasso Plattner Institute\\Potsdam, Germany \And Hasso Plattner Institute\\Potsdam, Germany} ]

\begin{abstract}

Several large-scale machine learning tasks, such as data summarization, can be approached by maximizing functions that satisfy submodularity. These optimization problems often involve complex side constraints, imposed by the underlying application. In this paper, we develop an algorithm with poly-logarithmic adaptivity for non-monotone submodular maximization under general side constraints. The adaptive complexity of a problem is the minimal number of sequential rounds required to achieve the objective.\\
Our algorithm is suitable to maximize a non-monotone submodular function under a $p$-system side constraint, and it achieves a $(p + \bigo{\sqrt{p}})$-approximation for this problem, after only poly-logarithmic adaptive rounds and polynomial queries to the valuation oracle function. Furthermore, our algorithm achieves a $(p + \bigo{1})$-approximation when the given side constraint is a $p$-extendible system.\\
This algorithm yields an exponential speed-up, with respect to the adaptivity, over any other known constant-factor approximation algorithm for this problem. It also competes with previous known results in terms of the query complexity. We perform various experiments on various real-world applications. We find that, in comparison with commonly used heuristics, our algorithm performs better on these instances.
\end{abstract}

\section{Introduction}

Several machine learning optimization problems consist of maximizing submodular functions. Examples include subset selection~\citep{DasK18}, data summarization~\citep{Lin:2010,DBLP:conf/icml/MirzasoleimanBK16}, and Bayesian experimental design~\citep{Chaloner95bayesianexperimental,DBLP:journals/jmlr/KrauseSG08}. These problems often involve constraints imposed by the underlying application. For instance, in video summarization tasks several constraints on the solution space arise based on qualitative features and contextual information~\citep{DBLP:conf/icml/MirzasoleimanBK16}.

The problem of maximizing a submodular function is NP-hard~\citep{DBLP:journals/jacm/Feige98}. However, several approximation algorithms for this problem have been discovered over the years. For monotone submodular functions, the classical result of~\citet{DBLP:journals/mp/NemhauserWF78} shows that  a simple greedy algorithm provides a $(1 - 1/e)$-approximation guarantee for the maximization of a \emph{monotone} submodular function under a uniform constraint. If an additional matroid constraint is imposed on the solution space, then greedy achieves a $(1/2)$-approximation guarantee on this problem~\citep{inbook}. A constant-factor approximation guarantee can also be achieved in the case of a knapsack constraint~\citep{DBLP:journals/orl/Sviridenko04}.\\
More complex constraints require more complex heuristics. Several algorithms have been discovered, to maximize a monotone submodular function under general side constraints such as $p$-systems and multiple knapsacks~\citep{DBLP:conf/soda/BadanidiyuruV14,DBLP:conf/focs/ChekuriP05}. These algorithms include streaming algorithms~\citep{DBLP:conf/kdd/BadanidiyuruMKK14,DBLP:conf/icalp/ChekuriGQ15,DBLP:journals/mp/ChakrabartiK15}, centralized algorithms~\citep{DBLP:conf/soda/BadanidiyuruV14,DBLP:conf/aaai/MirzasoleimanBK15}, and distributed algorithms~\citep{DBLP:conf/nips/MirzasoleimanKSK13,DBLP:journals/topc/KumarMVV15}.

\begin{table*}[t!]
\caption{Results for non-monotone submodular maximization with a $p$-system side constraint. Here, $n$ is the problem size, $r$ is the maximum size of a feasible solution, and $p$ the the parameter for the side constraint. The results on the adaptivity for previously known algorithms follow from the adaptivity of the greedy algorithm. Note also that all bounds on the adaptivity and query complexity for $p$-systems are parameterized by $p$. Whether it is possible to obtain bounds independent of $p$ for this problem remains an open question.}
\label{table:results}
 \begin{center}
 \scalebox{0.95}{
\begin{tabular}{c|c|c|c|} 
 \cline{2-4} 
 & \multicolumn{3}{c|}{\textbf{$p$-systems}} \\
 \hline
  \multicolumn{1}{|c|}{\textbf{Algorithm}}
  & \textbf{Approx.} &\textbf{Adaptivity} & \textbf{Query Complexity} \\
   \hline
 \multicolumn{1}{|c|}{$
 \begin{array}{c}
 \mbox{\kas{}}\\
 \mbox{[this work]}\\
 \end{array}
 $}
 & $\approx p + \bigo{\sqrt{p}}$ & $\bigo{\sqrt{p}  \log n \log \frac{r}{p } \log r }$ & $\bigo{\sqrt{p} n \log n \log \frac{r}{p } \log r }$ \\
    \hline
 \multicolumn{1}{|c|}{$
 \begin{array}{c}
 \mbox{\fastSGS{}}\\
 \mbox{\citep{DBLP:journals/corr/abs-2009-13998}}\\
 \end{array}
 $}
 & $\approx p + \bigo{\sqrt{p}}$ & $\bigo{p n \log n}$ & $ \bigo{p n \log n}.$ \\
 \hline
  \multicolumn{1}{|c|}{$
 \begin{array}{c}
 \mbox{\simGreedy{}}\\
 \mbox{\citep{DBLP:journals/corr/abs-2009-13998}}\\
 \end{array}
 $}
 & $ p + \bigo{\sqrt{p}}$ & $\bigo{\sqrt{p}  r }$ & $\bigo{p r n }$ \\
 \hline
  \multicolumn{1}{|c|}{$
 \begin{array}{c}
 \mbox{\greedy{}}\\
 \mbox{\citep{DBLP:conf/colt/FeldmanHK17} }\\
 \end{array}
 $}
  & $p + \bigo{\sqrt{p}}$ & $\bigo{\sqrt{p} r}$ & $\bigo{\sqrt{p} nr}$ \\
  \hline
    \multicolumn{1}{|c|}{$
 \begin{array}{c}
 \mbox{\fantom{}}\\
 \mbox{\citep{DBLP:conf/icml/MirzasoleimanBK16} }\\
 \end{array}
 $}
  & $\approx 2 p$ & $\bigo{p r}$ & $\bigo{p nr}$ \\
 \hline
   \multicolumn{1}{|c|}{$
 \begin{array}{c}
 \mbox{\greedy{}}\\
 \mbox{\citep{DBLP:conf/wine/GuptaRST10} }\\
 \end{array}
 $}
 & $\approx 3 p$ & $\bigo{p r}$ & $\bigo{p nr}$ \\ \hline
\multicolumn{4}{c}{}\\
 \cline{2-4} 
 & \multicolumn{3}{c|}{\textbf{$p$-extendable systems}} \\
 \hline
 \multicolumn{1}{|c|}{$
 \begin{array}{c}
 \mbox{\kas{}}\\
 \mbox{[this work]}\\
 \end{array}
 $}
 & $\approx p + \bigo{1}$ & $\bigo{  \log n \log^2 r }$ & $\bigo{n \log n \log^2 r } $ \\
 \hline
  \multicolumn{1}{|c|}{$
 \begin{array}{c}
 \mbox{\fastSGS{}}\\
 \mbox{\citep{DBLP:journals/corr/abs-2009-13998}}\\
 \end{array}
 $}
 & $\approx p + \bigo{1}$ & $\bigo{p^2 n \log n}$ & $ \bigo{p^2 n \log n}.$ \\
 \hline
  \multicolumn{1}{|c|}{$
 \begin{array}{c}
 \mbox{\simGreedy{}}\\
 \mbox{\citep{DBLP:journals/corr/abs-2009-13998}}\\
 \end{array}
 $}
 & $ p + \bigo{1}$ & $\bigo{p  r }$ & $\bigo{p^2 r n }$ \\
 \hline
  \multicolumn{1}{|c|}{$
 \begin{array}{c}
 \mbox{\samplegreedy{}}\\
 \mbox{\citep{DBLP:conf/colt/FeldmanHK17} }\\
 \end{array}
 $}
  & $p + \bigo{1}$ & $\bigo{ r}$ & $\bigo{ nr}$ \\
  \hline
  \multicolumn{4}{c}{}\\
 \cline{2-4} 
 & \multicolumn{3}{c|}{\textbf{$p$-matchoids}} \\
 \hline
 \multicolumn{1}{|c|}{$
 \begin{array}{c}
 \mbox{Parallel Greedy$^{\dagger}$ }\\
 \mbox{\citep{DBLP:conf/stoc/ChekuriQ19} }\\
 \end{array}
 $}
  & $\approx \frac{4 p + 4}{1 + o(1)}$ & $\bigo{\log n \log r}$ & $\bigo{n \log n \log r}$ \\
 \hline
\end{tabular}
}
\end{center}
{
\footnotesize $ $\vspace{5pt}\\$^{\dagger}$The Parallel Greedy algorithm requires access to the rank oracle for the underlying $p$-matchoid system. This oracle is strictly less general then the independence oracle required by all other algorithms in Table \ref{table:results}.}
\end{table*}
Many algorithms have also been proposed, to maximize \emph{non-monotone} submodular functions under a variety of constraints~\citep{DBLP:conf/icalp/FeldmanNS11,DBLP:journals/siamcomp/ChekuriVZ14,DBLP:conf/wine/GuptaRST10,DBLP:conf/stoc/LeeMNS09,DBLP:journals/siamcomp/FeigeMV11,DBLP:journals/siamcomp/BuchbinderFNS15}. These algorithms yield good approximation guarantees, but their run time is polynomial in the number of data-points, and polynomial in the number of additional side constraints.\\
Recently, algorithms were discovered to maximize a non-monotone submodular function under very general side constraints \citep{DBLP:conf/icml/MirzasoleimanBK16,DBLP:conf/colt/FeldmanHK17}. These constant-factor approximation algorithms scale polynomially in the number of data-points, but also in the number of additional side constraints.

In some cases, approximation algorithms do not exhibit increasingly worse run time in the number of constraints. This is the case when maximizing a submodular function under \emph{$p$-extendible systems} or \emph{$p$-matchoid} side constraints \citep{DBLP:conf/colt/FeldmanHK17,DBLP:conf/stoc/ChekuriQ19}. These side constraints are strictly less general than those studied in \citet{DBLP:conf/icml/MirzasoleimanBK16}, but they are general enough to capture a variety of interesting applications.

Submodular functions are learnable in the standard $\mathsf{PAC}$ and $\mathsf{PMAC}$ models \citep{DBLP:journals/cacm/Valiant84,DBLP:conf/stoc/BalcanH11}: given a collection of sampled sets and their submodular function values, it is possible to produce a surrogate that mimics the behavior of that function, on samples drawn from the same distribution. However, submodular objectives cannot be optimized from the training data we use to learn them \citep{DBLP:conf/stoc/BalkanskiRS17,DBLP:conf/icml/RosenfeldBGS18}. The reason is that, when learning from samples, resulting surrogate functions can be inapproximable, and their global optima can be far away from the true optimum.\\
Using an adaptive sampling framework \citep{doi:10.1080/01621459.1990.10474975}, it is possible to design algorithms that reach a constant-factor approximation guarantee in poly-logarithmic adaptive rounds for submodular maximization, both in the monotone \citep{DBLP:conf/icml/BalkanskiS18,DBLP:conf/stoc/BalkanskiRS19,DBLP:conf/stoc/BalkanskiS18,DBLP:conf/soda/FahrbachMZ19,DBLP:conf/soda/EneN19} and  non-monotone case \citep{DBLP:conf/nips/BalkanskiBS18,DBLP:conf/stoc/EneNV19,DBLP:conf/icml/FahrbachMZ19}. At each adaptive round, calls to the value oracle function are queried independently. However, lower-bounds for algorithms with low adaptivity are also known \citep{DBLP:journals/corr/abs-2002-09130,DBLP:conf/stoc/BalkanskiS18}.
\paragraph{Our contribution.} Focusing on sampling techniques, we study the problem of maximizing a non-monotone submodular function, to which we have oracle access. Furthermore, we consider general $p$-system and $p$-extendible system side constraints for this problem. 

Our algorithm has access to the side constraint structure via an oracle. Standard oracle models in the literature are: the \emph{independence oracle}, which takes as input a set and returns whether that set is a feasible solution; the \emph{rank oracle}, that returns the maximum cardinality of any feasible solution contained in a given input set; and the \emph{span oracle}, which for an input set $S$ and a point $\{e\}$ it returns whether or not $S \cup \{e\}$ has a higher rank than $S$. In this work, we assume access to the independence oracle, which is the most general oracle model of the three.

In this work, we develop the first algorithm with poly-logarithmic adaptivity suitable to maximize a non-monotone submodular function under a $p$-system side constraint and a $p$-extendible system. In contrast to all previous algorithms with low adaptivity, our algorithm only requires access to the independence oracle for the side constraints. This algorithm achieves strong approximation guarantees and run time, competing with known algorithms for this problem (see Table~\ref{table:results}).\\
We study the performance of our algorithm in two real-world applications, video summarization and Bayesian experimental design. We test our algorithm against other commonly used heuristics for this problem, and show that our algorithm comes out on top.

Our paper is organized as follows. We define the problem in Section \ref{sec:problem_description}, and we describe our algorithm in Section \ref{sec:algorithms}. Our theoretical analysis is presented in Sections \ref{sec:theoretical_analysis}-\ref{sec:ind_oracle}. Applications and experiments are discussed in Sections \ref{experimental:framework}-\ref{sec:bayesian_design}. We conclude in Section \ref{sec:conclusion}.
%
%
\section{Problem Description}
\label{sec:problem_description}
%
%
\paragraph{Submodularity.} In this paper, we study optimization problems that can be approached by maximizing an oracle function that, given a solution set, estimates its quality. We assume that oracle functions are \emph{submodular}.
\begin{definition}[Submodularity]
\label{def:submodular}
Given a finite set $V$, we call a set function $f\colon 2^V\to \mathbb{R}_{\geq 0}$ \emph{submodular} if for all $S,U\subseteq V$ we have that $f(S) + f(U) \geq f(S \cup U) + f(S \cap U)$.
\end{definition}
Note that we only consider functions that do not attain negative values. This is because submodular functions with negative values cannot be maximized, even approximately (see \citet{DBLP:journals/siamcomp/FeigeMV11}).
%
%
\paragraph{$p$-Systems.} We study the problem of maximizing a submodular function under additional side constraints, defined as a $p$-system side constraint. As discussed, i.e., in \citet{DBLP:conf/icml/MirzasoleimanBK16,DBLP:conf/wine/GuptaRST10}, these constraints are significantly more general than standard matroid intersections, and they arise in various domains, such as movie recommendation, video summarization, and revenue maximization.

Given a collection of feasible solutions $\ind{}$ over a ground set $V$ and a set $T \subseteq V$, we denote with $\ind{}\mid_T$ a collection consisting of all sets $S \subseteq T$ that are feasible in $\ind{}$. Furthermore, a base for $\ind{}$ is any maximum feasible set $U \in \ind{}$. We define $p$-systems as follows.
\begin{definition}
A $p$-system $\ind{}$ over a ground set $V$ is a collection of subsets of $V$ fulfilling the following three axioms:
\begin{itemize}
    \item $\emptyset \in \ind{}$;
    \item for any two sets $S \subseteq \Omega \subseteq V$, if $\Omega \in \ind{}$ then $S \in \ind{}$;
    \item for any set $T \subseteq V$ and any bases $S, U \in \ind{}\mid_T$ it holds $\absl{S} \leq p \absl{U}$.
\end{itemize}
\end{definition}
The second defining axiom is referred to as subset-closure or downward-closed property. With this notation, we study the following problem.
\begin{problem}
\label{problem}
Given a submodular function $f:2^{V} \rightarrow \mathbb{R}_{\geq 0}$ and a $p$-system $\ind{}$, find a set $S \subseteq V$ maximizing $f(S)$ such that $S \in \ind{}$.
\end{problem}
%
%
\paragraph{$p$-extendible Systems.} We also consider a family of side constraints of intermediate generality, commonly referred to as $p$-extendible systems. These side constraints are strictly less general than $p$-systems, but they capture various types of constraints found in practical applications.

Our main motivation in studying these constraints is that they admit algorithms that obtain strong approximation guarantees, in much less time than in the case of the $p$-systems. Hence, algorithms for $p$-extendible systems scale much better than for general $p$-systems.

These $p$-extendible systems were first studied by \cite{DBLP:conf/esa/Mestre06}, and they are defined as follows.
\begin{definition}
A $p$-extendible system $\ind{}$ over a ground set $V$ is a $p$-system, that fulfills the following additional axiom: for every pair of sets $S, \Omega \in \ind{}$ with $S \subset \Omega$, and for every element $e \notin S$, there exists a set $U \subseteq \Omega \setminus S$ of size $\absl{U} \leq p$ such that $\Omega \setminus U \cup \{ e\} \in \ind{}$.
\end{definition}
These side constraints generalize matroid intersections and $p$-matchoids. While being strictly less general than $p$-systems, this definition captures many interesting constraints, such as the intersection of matroids \citep{DBLP:conf/esa/Mestre06}. In this paper, we also study the following problem.
\begin{problem}
\label{problem2}
Given a submodular function $f:2^{V} \rightarrow \mathbb{R}_{\geq 0}$ and a $p$-extendible system $\ind{}$, find a set $S \subseteq V$ maximizing $f(S)$ such that $S \in \ind{}$.
\end{problem}
%
%
\paragraph{Adaptivity.} An algorithm is $T$-adaptive if every query $f(S)$ for the $f$-value of a solution $S$ occurs at a round $i \in [T]$ such that $S$ is independent of the values $f(S')$ of all other queries at round $i$, with at most polynomial queries at each round in the problem size. The query complexity is the number of calls to the evaluation oracle function.
%
%
\begin{algorithm2e}[t]
	\caption{\rndseq{}$(X, S, \ind{})$}
	\label{alg:rndseq}
	\begin{algorithmic}[1]
	    \STATE $A \gets \emptyset$;
 	    \WHILE{$X \neq \emptyset $}\label{alg:beginWhileAlg3}
 	    \STATE sort the points $\{x_i\}_i = X$ randomly;
 	    \STATE $\eta \gets \max \{ j \colon S \cup A \cup \{x_i\}_{i \leq j} \in \ind{} \} $;
 	    \STATE $A \gets A \cup \{x_1, \dots,x_{\eta}\}$;\label{alg:sequence:alg}
 	    \STATE $X \gets \{e \in X\setminus (S \cup A) \colon S \cup A \cup e  \in \ind{} \}$;\label{alg:rndseq:Y1}
        \ENDWHILE \label{alg:endWhileAlg3}
   	\STATE \textbf{return} $A$;
   	\end{algorithmic}
\end{algorithm2e}
\paragraph{Notation.}
For any submodular evaluation oracle function $f\colon 2^V \rightarrow \mathbb{R}$ and sets $S, U \subseteq V$, we define the marginal value of $S$ with respect to $U$ as $f (U \mid S) = f(S\cup U) - f(S)$.\\
Throughout the paper, we always use the notation introduced in Problem \ref{problem}: we denote with $f$ the evaluation oracle function, with $V$ the ground set, and with $\ind{}$ the $p$-system side constraint. We denote with $\opt$ a solution to Problem \ref{problem}, and we denote with $n$ the size of the ground set $V$, i.e., $n$ is the number of singletons in our solution space. We also denote with $r$ the maximum size of a feasible solution.\\
The notation introduced in Algorithm \ref{alg:rndseq}-\ref{alg:ksampl} is used consistently throughout the paper.
%
%
\section{Algorithms}
\label{sec:algorithms}

Our method consists of three parts (see Algorithms~\ref{alg:rndseq}-\ref{alg:ksampl}). We call these algorithms  \rndseq{}, \asplusplus{}, and \kas{} respectively. These algorithms also call the \binarysearch{} and \unif{} sub-routines. The following is a description of each algorithm and sub-routine.

\paragraph{\rndseq{}.} It is based on the work of \citet{DBLP:journals/jcss/KarpUW88}. Given as input a ground set $X$, a current solution $S$, and a $p$-system $\ind{}$, this algorithm finds a random set $A$ such that $S \cup A$ is a base for $\ind{}$.
\begin{algorithm2e}[t]
	\caption{\asplusplus{}$(f, V, \ind{},\lambda, \varepsilon, \varphi_1)$.}
	\label{alg:samplplusplus}
	\begin{algorithmic}[1]
	\STATE $S \gets \emptyset$;
	\STATE $X \gets \argmax_e \{ f(e) \colon e \in V \land e \in \ind \}$;
	\STATE $\delta \gets f(X)$, $\delta_0 \gets \lambda f(X) $;
	\WHILE{$\delta \geq \delta_0$}\label{alg:samplplusplus:beginWhile1}
   \WHILE {$X \neq \emptyset$}\label{alg:sampl:beginWhile2}
    \STATE $\{a_j\}_{j\in J } \gets \mbox{\rndseq{}}(X,S,\ind{})$;\label{alg:sampl:binaryBegin}
    \STATE $\eta \gets \mbox{\binarysearch}(J, \min \{j \in J \colon \absl{X_j} < (1 - \varepsilon ) \absl{X} \})$ with $X_j = \{e \in X \colon f (e \mid S \cup \{a_1, \dots, a_{j-1} \} ) \geq \delta \land S \cup \{a_1, \dots, a_{j-1} \} \cup e \in \ind \}$
   	\STATE $A \gets \mbox{\unif}(\{a_1, \dots, a_{\eta- 1}\}, \varphi_1)$;
    \STATE $X \gets X_{\eta }$; $S \gets S \cup A $;
    \ENDWHILE \label{alg:sampl:endWhile2} 
    \STATE $\delta \gets ( 1 - \varepsilon )\delta$;
    \STATE $X \gets \{e \in V \colon f(e\mid S) \geq \delta \land S \cup e \in \ind  \}$;\label{alg:sampl:defX}
    \ENDWHILE \label{alg:samplplusplus:endWhile1}
   	\STATE \textbf{return} $S$;
   	\end{algorithmic}
\end{algorithm2e}
\paragraph{\asplusplus{}.} This algorithms generalizes a sampling algorithm proposed in \citet{DBLP:conf/stoc/BalkanskiRS19} to non-monotone submodular maximization. This algorithm requires as input an oracle function $f$, a ground set $V$, a $p$-system or $p$-extendible system $\ind{}$, and parameters $\lambda, \varepsilon,  \varphi_1$. The parameter $\lambda$ determines the total number of iterations for the \asplusplus{}, the parameter $\varepsilon$ determines the rate with which the variable $\delta$ decreases, whereas $ \varphi_1$ determines the distribution for the \unif{} sub-routine. For a constant $\delta $, points are added to the current solution yielding a marginal contribution upper-bounded by $\delta$. Note that at each adaptive step, the \asplusplus{} uses the \binarysearch{} and the \unif{} sub-routine. If Algorithm \ref{alg:samplplusplus} reaches an iteration with $X = \emptyset$, then it decreases the value of $\delta $ so that points with lower marginal contribution can be added to the current solution.
\begin{algorithm2e}[t]
	\caption{\kas{}$(f, V, \ind{}, \varepsilon, \varphi_1, \varphi_2, m)$}
	\label{alg:ksampl}
	\begin{algorithmic}[1]
	\STATE $\lambda \gets \varepsilon (p + 1)/m $;
	\FOR{$j \leq m$ iterations} \label{for:kas_begin}
    {
        \STATE $\Omega_j \gets \asplusplus{}(f, V, \ind{}, \lambda, \varepsilon, \varphi_1)$;\label{for:kas_line1}
        \STATE $\Lambda_j \gets \mbox{\unif} (\Omega_j, \varphi_2)$;\label{alg:k-sampl_random}
        \STATE $V \gets V \setminus \Omega_j$;
   	}
   	\ENDFOR  \label{for:kas_end}
   	\STATE \textbf{return} $\argmax_j \{f(\Omega_j), f(\Lambda_j) \}$\;
   	\end{algorithmic}
\end{algorithm2e}
\paragraph{\binarysearch.} This sub-routine is just the standard binary search algorithm. It is used to locate an index $\eta$ such that $\eta = \min \{ j \colon \absl{X_j} \leq (1 - \varepsilon ) \absl{X} \}$, with $X_j $, where the index $j$ spans over the set $J$. This sub-routine uses the fact that, due to submodularity, it holds $\absl{X_j} \leq \absl{X_{j + 1}}$ for all $j \in J$.
\paragraph{\unif{}.} For a given input set and probability $\varphi$, this algorithm samples points of the input set independently, with probability $\varphi$.
\paragraph{\kas{}.} This algorithm requires as input an oracle function $f$ a ground set $V$, a $p$-system or $p$-extendible system $\ind{}$ and parameters $\lambda ,m, \varepsilon$ and $\varphi_1, \varphi_2$. At each step, the \kas{} calls Algorithm \ref{alg:samplplusplus} to find a partial solution $\Omega_j$. Then, Algorithm \ref{alg:ksampl} samples a subset of $\Omega_j$, where each point is drawn independently with probability $\varphi_2$. Afterwards, the \kas{} 
removes all points of $\Omega_j$ from the ground set, and it runs the \kas{} on the resulting ground set. This procedure is iterated $m $ times.
%
%
\section{Analysis for p-Systems}
\label{sec:theoretical_analysis}
In this section, we discuss theoretical run time analysis results for Problem \ref{problem}. We remark that \textbf{all proofs are deferred to the appendix}. Approximation guarantees for Algorithm \ref{alg:ksampl} follow from the following general theorem.
\begin{restatable}{theorem}{mainthm}
\label{thm1}
Fix constants $\varepsilon \in (0, 1)$, $m \geq 2$, $\varphi_1 = 1$, and $\varphi_2 = 1/2$. Denote with $\Omega^*$ the output of Algorithm \ref{alg:ksampl}. Then,
\begin{equation*}
    f(\opt) \leq m \left ( \frac{(1 + \varepsilon )(p + 1)}{(1 - \varepsilon)^2 (m - 1)} + 2 \right ) \expect{}{f(\Omega^*)}.
\end{equation*}
\end{restatable}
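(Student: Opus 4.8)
The plan is to follow the \emph{repeated-greedy} template of \citet{DBLP:conf/wine/GuptaRST10,DBLP:conf/colt/FeldmanHK17}, adapted to the adaptive subroutine \asplusplus{}. Write $O = \opt$. Since \kas{} deletes $\Omega_j$ from the ground set after round $j$, the solutions $\Omega_1,\dots,\Omega_m$ are pairwise disjoint; let $V_j = V\setminus\bigcup_{i<j}\Omega_i$ be the ground set at the start of round $j$, and set $O_j = O\cap V_j$. By downward-closure $O_j\in\ind{}$ and $O_j\subseteq V_j$, so $O_j$ is a legitimate benchmark for the run of \asplusplus{} on $V_j$. The three ingredients I would combine are (i) a per-round greedy guarantee for \asplusplus{}, (ii) a submodular ``disjoint union'' inequality that turns the $m$ per-round bounds into a bound on $f(O)$, and (iii) the sampling guarantee for the subsets $\Lambda_j$, which absorbs the non-monotone error.

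First I would invoke the per-round guarantee for \asplusplus{} (with $\varphi_1 = 1$, so \unif{} is the identity and the subroutine is a threshold-greedy). The $p$-system exchange argument---charging each element of $O_j$ to a greedy pick through a partition into $p$ blocks---combined with the fact that the thresholds decay geometrically by $(1-\varepsilon)$ and that \binarysearch{} accepts a prefix only while $\absl{X_j}\geq(1-\varepsilon)\absl{X}$, should yield
\begin{equation*}
f(O_j\cup\Omega_j)\ \leq\ \frac{(1+\varepsilon)(p+1)}{(1-\varepsilon)^2}\,f(\Omega_j),
\end{equation*}
up to a lower-order term controlled by the stopping threshold $\delta_0 = \lambda f(X)$ with $\lambda = \varepsilon(p+1)/m$. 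Write $C = (1+\varepsilon)(p+1)/(1-\varepsilon)^2$ for this single-round factor.

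Next I would pass from the per-round bounds to $f(O)$. For pairwise disjoint $\Omega_1,\dots,\Omega_m$, iterating submodularity on the sets $\Omega_j\cup O$ gives $\sum_{j} f(\Omega_j\cup O)\geq f\!\left(\bigcup_j\Omega_j\cup O\right) + (m-1)f(O)\geq (m-1)f(O)$, where nonnegativity of $f$ is used in the last step. Each term is then split as $f(\Omega_j\cup O) = f(\Omega_j\cup O_j) + f\!\left(R_j \mid \Omega_j\cup O_j\right)$ with $R_j = O\cap\bigcup_{i<j}\Omega_i$; the first summand is bounded by $C\,f(\Omega_j)\leq C\,f(\Omega^*)$ through the greedy guarantee, and, since $R_j=\bigcup_{i<j}(O\cap\Omega_i)$, submodularity bounds the correction by $\sum_{i<j} f(O\cap\Omega_i)$. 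Summing over $j$ and dividing by $m-1$ produces $f(O)\leq \tfrac{Cm}{m-1}\,f(\Omega^*) + \sum_{i} f(O\cap\Omega_i)$, valid for every realization of the algorithm's randomness.

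Finally I would control the correction $\sum_i f(O\cap\Omega_i)$ using the random subsets $\Lambda_i$ produced by \unif{} with sampling probability $\varphi_2 = 1/2$, which is where non-monotonicity is handled and where the additive constant originates. Conditioned on $\Omega_i$, the set $O\cap\Omega_i$ is fixed, and the standard sampling estimate for the multilinear extension (concavity along nonnegative directions) gives $\expectt{f(\Lambda_i)\mid\Omega_i}\geq c\, f(O\cap\Omega_i)$ for a constant $c$ determined by $\varphi_2$; taking expectations and using $f(\Omega^*)\geq f(\Lambda_i)$ turns each correction term into a multiple of $\expectt{f(\Omega^*)}$, and assembling everything yields $f(\opt)\leq m\!\left(\frac{(1+\varepsilon)(p+1)}{(1-\varepsilon)^2(m-1)} + 2\right)\expectt{f(\Omega^*)}$. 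I expect the main obstacle to be exactly this last bookkeeping: because $f$ is non-monotone one cannot write $f(O)\leq f(O\cup\Omega_j)$, so the value of $O$ lost on the already-deleted elements $R_j$ must be recovered, and matching the clean additive constant $2$ requires pairing the charging over $i<j$ with the sampling factor from $\varphi_2=1/2$ precisely, rather than bounding each piece in isolation.
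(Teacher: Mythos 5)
Your proposal is correct and follows essentially the same route as the paper's own proof: your ingredient (i) is Lemma \ref{lemma00} (with the stopping-threshold term appearing there as $\lambda r\,\expect{}{f(\Omega_j)}$ and absorbed into the factor $1+\varepsilon$), your disjoint-union step and the splitting of $f(\Omega_j\cup\opt)$ into $f(\Omega_j\cup(\opt\cap V_j))$ plus correction terms are exactly Lemmas \ref{thm2} and \ref{thm22}, and your ingredient (iii) is the Feige--Mirrokni--Vondr\'ak bound (Lemma \ref{thm23}) with $c=1/4$, which over the $m(m-1)/2$ correction terms $f(\opt\cap\Omega_i)$ yields precisely the additive constant $2$. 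The only slip is your claim that the intermediate inequality holds ``for every realization'': because \rndseq{} guarantees marginal gains above the threshold only in expectation (Lemma \ref{lemma20}), the per-round guarantee---and hence the whole chain---is valid only after taking expectations, which is how the paper states Lemma \ref{lemma00} and is all that the final bound requires.
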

A proof of this theorem is given in Appendix \ref{appendix:A}-\ref{appendix:B1}.

We estimate the number of adaptive rounds until Algorithm~\ref{alg:ksampl} reaches the desired approximation guarantee. The following lemma holds.
\begin{restatable}{lemma}{runTime}
\label{lemma:runTime}
Fix constants $\varepsilon \in (0, 1)$, $\varphi_1, \varphi_2 \in [0, 1]$ and $m \geq 0 $. Then Algorithm \ref{alg:ksampl} terminates after {$\bigo{\frac{m}{\varepsilon^{2}} \log \left ( \frac{r}{p\varepsilon } \right ) \log r \log n }$} rounds of adaptivity. Furthermore, Algorithm \ref{alg:ksampl} has query complexity of {$\bigo{\frac{m n}{\varepsilon^{2}} \log \left ( \frac{r}{p\varepsilon } \right ) \log r \log n }$}.
\end{restatable}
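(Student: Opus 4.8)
The plan is to bound the adaptivity of \kas{} by walking through the nested loop structure and multiplying the per-level iteration counts. Since \kas{} (Algorithm~\ref{alg:ksampl}) runs its $m$ iterations sequentially — each iteration deletes $\Omega_j$ from $V$ before the next one starts — its adaptivity is exactly $m$ times the adaptivity of a single call to \asplusplus{} (Algorithm~\ref{alg:samplplusplus}), plus the $O(m)$ rounds spent on the cheap \unif{} calls and the final $\argmax$. So the whole task reduces to bounding the adaptivity of \asplusplus{}.

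First I would bound the two nested while loops of \asplusplus{}. In the outer loop (line~\ref{alg:samplplusplus:beginWhile1}) the threshold $\delta$ starts at $f(X) = \max_e f(e)$ and is scaled by $(1-\varepsilon)$ at the end of each pass until it falls below $\delta_0 = \lambda f(X)$; hence there are $O(\varepsilon^{-1}\log(1/\lambda))$ outer iterations. Substituting $\lambda = \varepsilon(p+1)/m$ from Algorithm~\ref{alg:ksampl} gives $\log(1/\lambda) = \log(m/(\varepsilon(p+1)))$, which in the regime of interest $m = O(r)$ is $O(\log(r/(p\varepsilon)))$. For the inner loop (line~\ref{alg:sampl:beginWhile2}) I would show each pass shrinks $|X|$ by a constant factor: \binarysearch{} returns $\eta = \min\{j \in J : |X_j| < (1-\varepsilon)|X|\}$ and the algorithm sets $X \gets X_\eta$, so $|X| < (1-\varepsilon)|X|$ after each pass. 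Since $|X| \le n$ at the start of the inner loop and the loop halts at $X = \emptyset$, there are $O(\varepsilon^{-1}\log n)$ inner iterations.

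Next I would count the adaptive $f$-queries inside one inner iteration. The calls to \rndseq{} and \unif{} issue no queries to $f$ — they touch only the independence oracle and the sampling distribution — so they contribute nothing to the $f$-adaptivity. The binary search is the only source of adaptive $f$-queries: it searches over the index set $J$, which indexes a base of $\ind{}$ and hence has $|J| \le r$, and at each of its $O(\log r)$ steps it evaluates all marginals $f(e \mid S \cup \{a_1,\dots,a_{j-1}\})$ for $e \in X$ in parallel. Multiplying the three counts gives adaptivity $O(\varepsilon^{-2}\log(r/(p\varepsilon))\log n \log r)$ for \asplusplus{}, and the claimed $O(m\varepsilon^{-2}\log(r/(p\varepsilon))\log r \log n)$ for \kas{} after the factor-$m$ blow-up. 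For the query complexity, each adaptive round issues at most $|X| \le n$ evaluations of $f$ (plus $O(n)$ for the initial $\argmax$ and $O(m)$ for the final selection), so the total number of oracle calls is the adaptivity times $n$, yielding $O(mn\varepsilon^{-2}\log(r/(p\varepsilon))\log r \log n)$.

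The main obstacle I expect is justifying that each inner-loop pass genuinely contracts $|X|$ by the factor $(1-\varepsilon)$, since this is what turns a potentially linear number of passes into a logarithmic one and is the crux of the poly-logarithmic adaptivity. This needs the threshold index $\eta$ to exist and to be located correctly by \binarysearch{}: I would use that $|X_j|$ is monotone in $j$ (as asserted for \binarysearch{}) and that, because \rndseq{} returns a sequence whose prefix $\{a_1,\dots,a_{|J|}\}$ completes $S$ to a base, the set of high-marginal feasible extensions at the end of the sequence is empty — so the target $(1-\varepsilon)|X|$ is necessarily crossed and a valid $\eta$ exists. A secondary point is the bookkeeping that excludes independence-oracle work from the $f$-adaptivity count (its cost is treated in the independence-oracle analysis), so that only the binary search contributes the $\log r$ factor per inner pass.
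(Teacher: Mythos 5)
Your proof follows essentially the same route as the paper's: bound the outer while-loop of \asplusplus{} by $\bigo{\varepsilon^{-1}\log(r/(p\varepsilon))}$ iterations via the geometric decay of $\delta$ from $f(X)$ down to $\lambda f(X)$, bound the inner while-loop by $\bigo{\varepsilon^{-1}\log n}$ iterations via the $(1-\varepsilon)$-contraction of $\absl{X}$, charge each inner pass $\bigo{\log r}$ adaptive rounds of \binarysearch{} with at most $n$ parallel queries per round, and multiply everything by $m$, with the query complexity obtained as adaptivity times $n$. If anything, you are more careful than the paper on two points it glosses over — the dependence of $\log(1/\lambda)$ on $m$ when substituting $\lambda = \varepsilon(p+1)/m$, and the existence of the crossing index $\eta$ (guaranteed because \rndseq{} completes $S$ to a base, so the feasible high-marginal set is eventually empty) — so there is nothing to fix.
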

A proof of this result is given in Appendix \ref{appendix:C}. The following lemma follows from Theorem \ref{thm1} and Lemma~\ref{lemma:runTime}. 
\begin{restatable}{lemma}{specific}
\label{thm:specific}
Fix a constant $\varepsilon \in (0, 1)$, and define parameters $m = 1 + \lceil \sqrt{(p + 1)/2} \rceil$, $\varphi_1 = 1$, and $\varphi_2 = 1/2$. Denote with $\Omega^*$ the optimal solution found by Algorithm \ref{alg:ksampl}. Then,
\begin{equation*}
    f(\opt) \leq \frac{1 - \varepsilon }{(1 - \varepsilon)^2 } \left (p + 2\sqrt{2(p + 1)} + 5 \right ) \expect{}{f(\Omega^*)}.
\end{equation*}
Furthermore, with this parameter choice Algorithm \ref{alg:ksampl} terminates after $\bigo{\frac{\sqrt{p}}{\varepsilon^{2}}  \log n \log \left (\frac{r}{p \varepsilon } \right )\log  r }$ rounds of adaptivity, and its query complexity is $\bigo{\frac{\sqrt{p} n}{\varepsilon^{2}} \log n \log \left (\frac{r}{p \varepsilon } \right )\log  r}$.
\end{restatable}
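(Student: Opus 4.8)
The plan is to derive both parts of the statement purely by substituting the prescribed value $m = 1 + \lceil \sqrt{(p+1)/2} \rceil$ into the two results already established, Theorem~\ref{thm1} and Lemma~\ref{lemma:runTime}; no new structural fact about the algorithm is required. For the approximation guarantee I would start from the bound $f(\opt) \le m \left( \tfrac{(1+\varepsilon)(p+1)}{(1-\varepsilon)^2(m-1)} + 2 \right) \expect{}{f(\Omega^*)}$ of Theorem~\ref{thm1}, which applies to the chosen $\varphi_1 = 1$, $\varphi_2 = 1/2$ and any integer $m \ge 2$. The chosen $m$ meets the hypothesis $m \ge 2$ for every $p \ge 1$, since then $\sqrt{(p+1)/2} \ge 1$. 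The entire task is then to simplify the resulting coefficient in $m$ and re-express it in terms of $\sqrt{2(p+1)}$.

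The two elementary facts I would use about the ceiling are $m - 1 = \lceil \sqrt{(p+1)/2} \rceil \ge \sqrt{(p+1)/2}$ and $m = \lceil \sqrt{(p+1)/2} \rceil + 1 \le \sqrt{(p+1)/2} + 2$. Writing the coefficient as $\tfrac{(1+\varepsilon)(p+1)}{(1-\varepsilon)^2} \cdot \tfrac{m}{m-1} + 2m$ and using $\tfrac{m}{m-1} = 1 + \tfrac{1}{m-1}$, the first fact gives $\tfrac{p+1}{m-1} \le \sqrt{2(p+1)}$ (via $\tfrac{p+1}{\sqrt{(p+1)/2}} = \sqrt{2(p+1)}$), while the second gives $2m \le 2\sqrt{(p+1)/2} + 4 = \sqrt{2(p+1)} + 4$. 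Substituting these yields the upper bound $\tfrac{1+\varepsilon}{(1-\varepsilon)^2}\left( (p+1) + \sqrt{2(p+1)} \right) + \sqrt{2(p+1)} + 4$ on the coefficient. Finally, since $\tfrac{1+\varepsilon}{(1-\varepsilon)^2} \ge 1$ on $(0,1)$, the two trailing terms $\sqrt{2(p+1)} + 4$ can be absorbed into the leading factor, producing $\tfrac{1+\varepsilon}{(1-\varepsilon)^2}\left( p + 2\sqrt{2(p+1)} + 5 \right)$, which is the coefficient appearing in the statement (inherited, as it must be, from the $(1+\varepsilon)/(1-\varepsilon)^2$ factor of Theorem~\ref{thm1}).

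For the adaptivity and query complexity I would simply observe that the chosen $m$ satisfies $m = \Theta(\sqrt{p})$, hence $m = \bigo{\sqrt{p}}$, and substitute this into Lemma~\ref{lemma:runTime}: the factor $m$ in both the $\bigo{\tfrac{m}{\varepsilon^2} \log(\tfrac{r}{p\varepsilon}) \log r \log n}$ adaptivity bound and the corresponding query bound becomes $\sqrt{p}$, and the stated expressions follow at once. The only delicate point in the whole argument is the bookkeeping around the ceiling: one must orient the estimates $m - 1 \ge \sqrt{(p+1)/2}$ and $m \le \sqrt{(p+1)/2} + 2$ in exactly the directions that keep the additive constant clean, and notice that the $+4$ coming from $2m$ together with the $+1$ hidden inside $(p+1)$ is precisely what combines into the $+5$. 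I expect this constant-collection step, and in particular the justification that the non-$\tfrac{1+\varepsilon}{(1-\varepsilon)^2}$ terms may be folded into the leading coefficient, to be the main (though entirely routine) obstacle.
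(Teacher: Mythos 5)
Your proposal is correct and follows essentially the same route as the paper's own proof: substitute $m = 1 + \lceil \sqrt{(p+1)/2} \rceil$ into Theorem~\ref{thm1}, bound the ceiling via $\lceil x \rceil \ge x$ and $\lceil x \rceil \le x+1$, absorb the additive terms into the leading factor using $\tfrac{1+\varepsilon}{(1-\varepsilon)^2} \ge 1$ (the paper folds $2m$ in before bounding the ceilings, you do it after --- an immaterial reordering), and read off the adaptivity and query bounds from Lemma~\ref{lemma:runTime} with $m = \bigo{\sqrt{p}}$. Note that both you and the paper's proof arrive at the coefficient $\tfrac{(1+\varepsilon)}{(1-\varepsilon)^2}\left(p + 2\sqrt{2(p+1)} + 5\right)$, so the $(1-\varepsilon)$ numerator in the lemma statement is evidently a typo rather than a defect of your argument.
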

A proof is given in Appendix \ref{appendix:E}. We remark that there exists an algorithm with constant adaptivity for unconstrained non-monotone submodular maximization that achieves an approximation guarantee arbitrarily close to $1/2$ (see \citet{DBLP:conf/stoc/0003FK19}). Using this algorithm as a sub-routine in line \ref{alg:k-sampl_random} of Algorithm \ref{alg:ksampl} yields a constant-factor improvement over the approximation guarantee of Lemma \ref{thm:specific}, without affecting the upper-bound on the adaptivity. However, this algorithm requires access to a continuous extension of the value oracle $f$, whereas Algorithm \ref{alg:ksampl} only requires access to $f$.
%
%
\section{Analysis for p-extendible Systems}
In this section, we perform the theoretical analysis for the \kas{}, when maximizing a non-monotone submodular function under a $p$-extendible system side constraint, as in Problem \ref{problem2}. We prove that, with different sets of input parameters, our algorithm has adaptivity and query complexity that is not dependent on $p$. Again, all proofs are deferred to the appendix. The following theorem holds.
\begin{restatable}{lemma}{mainthmext}
\label{thm:extendible}
Fix parameters $\varepsilon \in (0, 1)$, $m = 1$, $\varphi_1 = (p + 1)^{-1}$, and $\varphi_2 \in [0, 1]$. Denote with $\Omega^*$ the output of Algorithm \ref{alg:ksampl}. Then,
\begin{equation*}
    f(\opt) \leq \frac{(1 + \varepsilon)(p + 1)^2}{p(1 - \varepsilon)^2} \expect{}{f(\Omega^*)}.
\end{equation*}
With this parameter choice, Algorithm \ref{alg:ksampl} terminates after $\bigo{\varepsilon^{-2} \log n \log \left (\frac{r}{\varepsilon } \right )\log  r}$ rounds of adaptivity, and it requires $\bigo{\frac{n}{\varepsilon^{2}} \log n \log \left (\frac{r}{\varepsilon } \right )\log  r}$ function evaluations.
\end{restatable}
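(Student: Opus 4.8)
The plan is to reduce the claim to a single run of \asplusplus{} and then to replace the aggregate $p$-system counting used for Theorem~\ref{thm1} by an injective charging that exploits the extra axiom of $p$-extendible systems. Since $m = 1$, Algorithm~\ref{alg:ksampl} returns $\Omega^* = \argmax\{f(\Omega_1), f(\Lambda_1)\}$, where $\Omega_1$ is the output of \asplusplus{} run with $\lambda = \varepsilon(p+1)$ and $\varphi_1 = q := (p+1)^{-1}$. In particular $f(\Omega^*) \geq f(\Omega_1)$, so it suffices to prove $f(\opt) \leq \frac{(1+\varepsilon)(p+1)^2}{p(1-\varepsilon)^2}\,\expect{}{f(\Omega_1)}$. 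This also explains why the bound is independent of $\varphi_2$: the second sample $\Lambda_1$ is never needed, and the entire non-monotone correction is carried by the subsampling at rate $q$ inside \asplusplus{}.

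I would establish the $\varepsilon$-free skeleton first. Writing $S = \Omega_1$ and $O = \opt$, the chain is
\begin{align*}
(1-q)\,f(O) &\le \expect{}{f(O \cup S)} \\
&\le \expect{}{f(S)} + \expect{}{\textstyle\sum_{o \in O \setminus S} f(o \mid S)} \\
&\le (1+p)\,\expect{}{f(S)}.
\end{align*}
The first inequality is the standard non-monotone sampling bound applied to the submodular function $g(\cdot) = f(O \cup \cdot)$: because \unif{} includes each candidate independently with probability $q$, every element lies in $S$ with marginal probability at most $q$, whence $\expect{}{g(S)} \ge (1-q) g(\emptyset) = (1-q) f(O)$. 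The middle step is submodularity. The last step is the $p$-extendible charging: I would build an assignment $\pi\colon O \setminus S \to S$ with $\absl{\pi^{-1}(s)} \le p$ for every $s \in S$ and $f(o \mid S) \le f(o \mid S_{<\pi(o)}) \le f(s_{\pi(o)} \mid S_{<\pi(o)})$, so that $\sum_o f(o \mid S) \le p \sum_s f(s \mid S_{<s}) \le p\,f(S)$. Such a $\pi$ exists precisely because the defining axiom of a $p$-extendible system lets us insert any element into a feasible set after deleting at most $p$ of its members, applied backwards along the order in which $S$ was assembled. Solving the chain gives $f(O) \le \frac{p+1}{1-q}\expect{}{f(S)}$, and substituting $q = (p+1)^{-1}$ turns $\frac{p+1}{1-q}$ into exactly $\frac{(p+1)^2}{p}$.

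It then remains to recover the factor $\frac{1+\varepsilon}{(1-\varepsilon)^2}$, which is inherited from the analysis of \asplusplus{} already used for Theorem~\ref{thm1}. \asplusplus{} does not add a single greedy-best element; it sweeps a geometrically decreasing threshold $\delta \gets (1-\varepsilon)\delta$ and, at each threshold, adds a subsample of a \rndseq{} prefix located by \binarysearch{} so that the candidate set contracts by a factor $(1-\varepsilon)$. Every added element thus has marginal at least $(1-\varepsilon)\delta$, while every optimal element still available at threshold $\delta$ has marginal below $\delta$, and the stopping rule $\delta \ge \lambda f(X)$ with $\lambda = \varepsilon(p+1)$ discards only marginals summing to an $\varepsilon$-fraction of the relevant scale. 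The main obstacle is porting the injective charging into this thresholded, batched, random setting: the greedy comparison in $\pi$ must be replaced by the threshold estimate $f(o \mid S_{<\pi(o)}) \le \delta \le (1-\varepsilon)^{-1} f(s_{\pi(o)} \mid S_{<\pi(o)})$, and one must verify that the simultaneous additions within a single \binarysearch{} batch do not cause more than $p$ optimal elements to be charged to any one added element. Carrying these $(1-\varepsilon)$ losses through the chain, together with a $(1+\varepsilon)$ from summing the truncated geometric series of thresholds, upgrades $\frac{(p+1)^2}{p}$ to $\frac{(1+\varepsilon)(p+1)^2}{p(1-\varepsilon)^2}$.

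Finally, the adaptivity and query bounds are immediate from Lemma~\ref{lemma:runTime} evaluated at $m = 1$, which gives $\bigo{\varepsilon^{-2}\log n \log(\frac{r}{p\varepsilon})\log r}$ rounds and $\bigo{\frac{n}{\varepsilon^{2}}\log n \log(\frac{r}{p\varepsilon})\log r}$ queries; bounding $\log(\frac{r}{p\varepsilon}) \le \log(\frac{r}{\varepsilon})$, valid since $p \ge 1$, yields the stated $p$-independent expressions.
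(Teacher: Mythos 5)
Your skeleton is the same decomposition the paper uses: the non-monotone sampling bound $(1-q)f(\opt) \le \expect{}{f(\opt \cup S)}$ applied to $g(\cdot) = f(\opt \cup \cdot)$ (the paper's Lemma~\ref{thm23sdf}), then submodularity, then a charging step worth $p\,\expect{}{f(S)}$ (the paper's Lemmas~\ref{lemma:p_ext1}--\ref{lemma:p_ext3}), and finally the run-time bound from Lemma~\ref{lemma:runTime} with $m=1$. The gap is in your justification of the charging step. You propose a \emph{pointwise} injective assignment $\pi\colon \opt\setminus S \to S$ with $\absl{\pi^{-1}(s)}\le p$ and $f(o \mid S) \le f(s_{\pi(o)} \mid S_{<\pi(o)})$. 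No such $\pi$ can exist in general: because $\varphi_1 = (p+1)^{-1}$, an optimal element can pass the threshold, appear in a \rndseq{} prefix, and then be discarded by the \unif{} coin. Such an element lies in $\opt\setminus S$ yet was displaced by no addition, so there is no added element to charge it to with marginal domination; worse, after discarding, nothing in the algorithm upper-bounds its final marginal except the threshold at its own slot, where nothing was gained. In fact the pointwise inequality $\sum_{o\in \opt\setminus S} f(o\mid S) \le p f(S)$ is simply false: with positive probability every coin flip fails, giving $S=\emptyset$ while the left-hand side is of order $f(\opt)$. The inequality holds only in expectation, and the obstacle you flag instead (more than $p$ charges per element within a \binarysearch{} batch) is secondary to this one.

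The correct argument, which is what the paper's Lemmas~\ref{lemma:p_ext1}--\ref{lemma:p_ext3} implement, is an online, expectation-based accounting: maintain a surviving optimum via random deletion sets $O_i$; when a considered element is \emph{added} (probability $(p+1)^{-1}$) and is not itself in the surviving optimum, the $p$-extendible axiom forces at most $p$ deletions; when it is \emph{rejected} (probability $p/(p+1)$) and is in the surviving optimum, it alone is deleted. The two contributions per consideration are $\frac{1}{p+1}\cdot p$ and $\frac{p}{p+1}\cdot 1$, and crucially they are mutually exclusive --- whether $v_i$ lies in the surviving optimum is determined before the coin flip, and adding a surviving-optimal element forces no deletions at all --- so the expected loss per consideration is $\frac{p}{p+1}$ times the threshold bound, not $\frac{2p}{p+1}$. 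This balance is precisely why $\varphi_1 = (p+1)^{-1}$ is the right sampling rate; without the mutual-exclusivity refinement the constant degrades to roughly $(2p+1)(p+1)/p$ and the stated bound does not follow. A second, smaller point: the comparison between deleted elements' marginals and the algorithm's gain goes through the threshold $\delta$ only \emph{in expectation} (Lemmas~\ref{lemma2} and~\ref{lemma20}), which is yet another reason a pointwise domination $f(o\mid S_{<\pi(o)}) \le f(s_{\pi(o)}\mid S_{<\pi(o)})$ is unavailable. Your reduction to $m=1$, the independence from $\varphi_2$, and the adaptivity/query bounds are all correct and match the paper.
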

For a proof of this result see Appendix \ref{appendix:G}.
The proof of this lemma is based on the work of \cite{DBLP:conf/colt/FeldmanHK17}, together with the fact that Algorithm \ref{alg:samplplusplus} yields expected marginal increase lower-bounded by the best possible greedy improvement, up to a multiplicative constant.

We remark that Lemma \ref{thm:extendible} also holds when side constraints are $p$-matchoids and the intersections of matroids, since $p$-extendible systems are a generalization of both.
%
%
\section{Query Complexity and Adaptivity of the Independence Oracle}
\label{sec:ind_oracle}
We conclude our analysis with a general discussion on the performance of Algorithm \ref{alg:ksampl} in the number of calls to the independence oracle for the $p$-system  constraint. The independence oracle takes as input a set $S$, and returns as output a Boolean value, true if the given set is independent in $\ind{}$ and false otherwise. The following lemma holds.
\begin{restatable}{lemma}{indOracle}
\label{lemma:indOracle}
Fix parameters $\varepsilon \in (0, 1)$, $m \geq 1$, and $\varphi_1, \varphi_2 \in [0, 1]$. Then Algorithm \ref{alg:ksampl} requires expected $\bigo{\frac{m \sqrt{n}}{\varepsilon^{2}} \log \left ( \frac{r}{p\varepsilon } \right ) \log r \log n }$ rounds of independent calls to the oracle for the $p$-system constraint. Furthermore, the total number of calls to the independence system is $\bigo{\frac{m n^{3/2}}{\varepsilon^{2}} \log \left ( \frac{r}{p\varepsilon } \right ) \log r \log n }$.
\end{restatable}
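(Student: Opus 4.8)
The plan is to separate the two kinds of oracle accesses made by Algorithms~\ref{alg:rndseq}--\ref{alg:ksampl}: calls to the value oracle $f$, whose adaptivity was already controlled in Lemma~\ref{lemma:runTime}, and calls to the independence oracle for $\ind{}$, which are the object here. First I would catalogue every place $\ind{}$ is queried: (i) inside \rndseq{}, both to compute the breakpoint and extend $A$ by the maximal independent prefix (line~\ref{alg:sequence:alg}) and to refresh the candidate set $X$ (line~\ref{alg:rndseq:Y1}); and (ii) inside \asplusplus{}, to form the sets $X$ and the slices $X_j$ (the latter inside the \binarysearch{} step). Since $\ind{}$ never appears in \unif{}, the sampling parameters $\varphi_1,\varphi_2$ play no role, which explains why the claimed bound does not depend on them; and since the $f$-evaluations are accounted for separately, the queries above are the only relevant ones.

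Next I would show that each elementary operation costs $\bigo{1}$ adaptive rounds of independence calls. For the breakpoint $\eta$, downward-closure makes the predicate $j \mapsto [\, S \cup A \cup \{x_i\}_{i \le j} \in \ind{}\,]$ monotonically non-increasing, so all $|X| \le n$ prefixes can be tested simultaneously and $\eta$ read off as the last true entry: one round, $\bigo{n}$ calls. Refreshing $X$, and forming $X$ or a single slice $X_j$, each reduce to testing $S \cup (\cdots) \cup e \in \ind{}$ independently over $e$: again one round and $\bigo{n}$ calls. Hence every round of the algorithm issues at most $\bigo{n}$ independence calls, so the total-call bound follows from the round bound by multiplying by $\bigo{n}$.

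The crux is bounding the adaptive rounds spent inside \rndseq{}, since this subroutine makes no call to $f$ and therefore contributes nothing to Lemma~\ref{lemma:runTime}, yet it dominates the independence-oracle adaptivity. The main lemma I would establish is that a single invocation of \rndseq{} terminates after $\bigo{\sqrt n}$ iterations of its while loop in expectation (over the random orderings it draws); this is exactly the Karp--Upfal--Wigderson analysis of the random-prefix construction of a base. The intuition is a birthday-type estimate: when the remaining rank budget is $\rho$, a uniformly random ordering of $X$ has an independent prefix of expected length $\Omega(\sqrt\rho)$ before the first element that would violate independence, so the budget obeys in expectation a recursion of the form $\sqrt{\rho_{t+1}} \le \sqrt{\rho_t} - \Omega(1)$, which telescopes to $\bigo{\sqrt r} = \bigo{\sqrt n}$ iterations. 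Carrying this probabilistic shrinkage argument over from matroids to $p$-systems is the step I expect to be the most delicate.

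Finally I would assemble the pieces. By the accounting of Lemma~\ref{lemma:runTime}, the number of inner-loop iterations of \asplusplus{}, summed over all $\delta$-phases and all $m$ outer rounds of \kas{}, equals the stated round bound divided by the $\bigo{\log r}$ depth of \binarysearch{}, namely $\bigo{\frac{m}{\varepsilon^2}\log \left( \frac{r}{p\varepsilon} \right) \log n}$; this is exactly the number of calls to \rndseq{}, since it is invoked once per inner iteration. Multiplying by the $\bigo{\sqrt n}$ rounds per call, and adding the $\bigo{1}$-per-round contributions of \binarysearch{} and the $X$-formation steps (which merely reproduce the round count of Lemma~\ref{lemma:runTime} and are dominated by the $\sqrt n$ factor), yields the claimed $\bigo{\frac{m\sqrt n}{\varepsilon^2}\log \left( \frac{r}{p\varepsilon} \right) \log r\log n}$ expected rounds of independent calls. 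Multiplying this round bound by the $\bigo{n}$ calls-per-round from the second step gives the stated $\bigo{\frac{m n^{3/2}}{\varepsilon^2}\log \left( \frac{r}{p\varepsilon} \right) \log r\log n}$ total calls, completing the proof.
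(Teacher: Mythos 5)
Your proposal is correct and follows essentially the same route as the paper's proof: catalogue the independence-oracle calls made by \rndseq{} and \asplusplus{}, observe that within each iteration all such calls are non-adaptive (at most $\bigo{n}$ per round), bound \rndseq{} by $\bigo{\sqrt{n}}$ expected rounds, combine with the iteration count underlying Lemma~\ref{lemma:runTime}, and multiply by $\bigo{n}$ calls per round for the total. The one step you flag as delicate --- the $\bigo{\sqrt{n}}$ expected-round bound for the random-prefix construction --- is exactly what the paper imports as a black box (Theorem~6 of \citet{DBLP:journals/jcss/KarpUW88}, restated as Theorem~\ref{thm:Karp}), so no re-derivation is needed; your finer accounting, which keeps the \binarysearch{} depth separate from the \rndseq{} rounds, is harmless and in fact yields a marginally sharper bound than the one claimed.
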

A proof of this result is given in Appendix \ref{appendix:D}, and it follows from the work of \citet{DBLP:journals/jcss/KarpUW88}. Note that the rounds of independent calls to the oracle are sub-linear, but not poly-logarithmic in the problem size. The reason is that Algorithm \ref{alg:rndseq} requires $\bigo{\sqrt{n}}$ rounds of independent calls to the oracle for the $p$-system. We are not aware of any algorithm that finds a base in less than $\bigo{\sqrt{n}}$ rounds. Furthermore, it is well-known that there is no algorithm that obtains an approximation guarantee that is constant in the problem size for Problem \ref{problem}, than $\tilde{\Omega}(n^{1/3})$ steps of independent calls to the oracle for the $p$-system constraint (see \citet{DBLP:journals/jcss/KarpUW88,DBLP:conf/stoc/BalkanskiRS19}).

For a $p$-system $\ind{}$, the rank of a set $S$ is the maximum cardinality of its intersection with a maximum independent set in $\ind{}$. Given access to an oracle that returns the rank of a set in~$\ind{}$, it is possible to design an algorithm that finds a maximum independent set of a $p$-system in $\bigo{\log n^2}$ rounds of independent calls to the rank oracle (see \citet{DBLP:journals/jcss/KarpUW88}). However, this work focuses on general constraints where the rank of a set is not known.

%
%
\begin{figure*}[t]
\includegraphics[width=\linewidth]{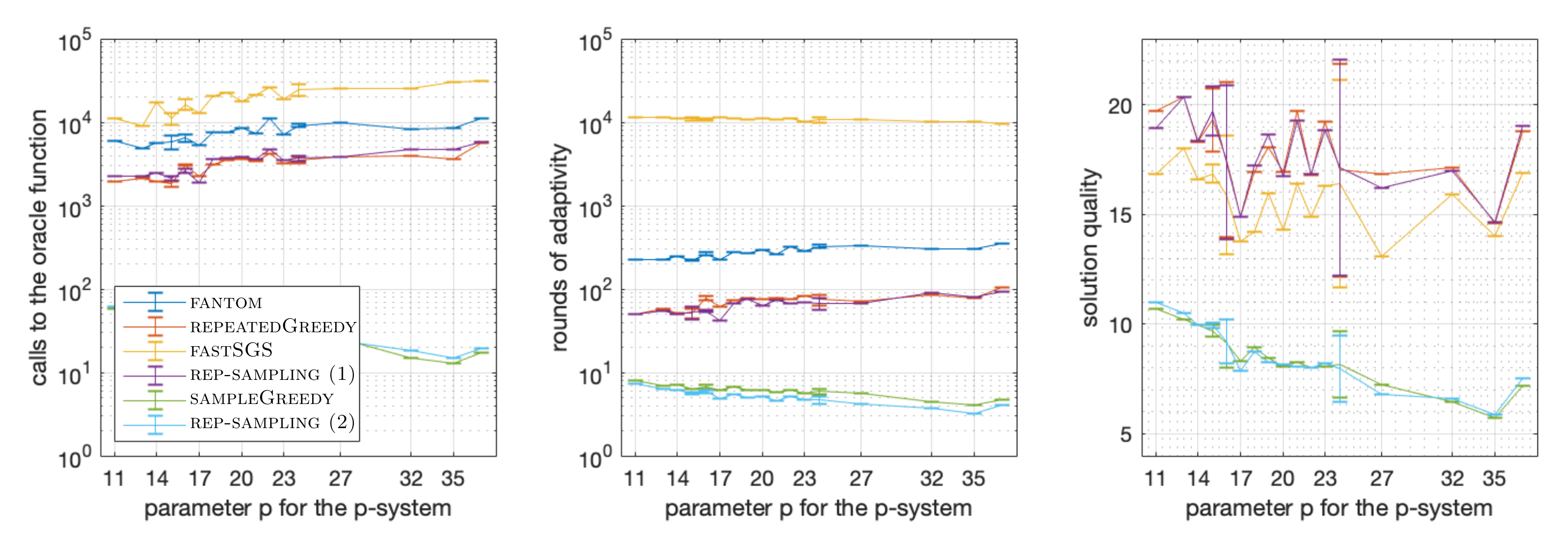}
\caption{Figure 1: Results for the experiments on Video Summarization on movie segments taken from FLIC \citep{modec13}. Each plot shows the average performance over segments with fixed $p$. Error bars correspond to the best and worst case. Note that the $y$-axis in the two leftmost plots uses a logarithmic scale. The \fastSGS{} uses parameters $\ell = \lfloor 2 + \sqrt{p + 1} \rfloor$ and $\varepsilon = 0.1$; the \kas{} \textsc{(1)} uses parameters $\varepsilon = 0.1$, $m = 1 + \lceil \sqrt{(p + 1)/2 }\rceil$, $\varphi_1 = 1$, $\varphi_2 = 0.5$;  the \kas{} \textsc{(2)} uses parameters $\varepsilon = 0.01$, $m = 1$, $\varphi_1 = (1 + p)^{-1}$, $\varphi_2 = 1$.}
\label{fig:1}
\end{figure*}
\begin{figure*}[t]
{\includegraphics[width=\linewidth]{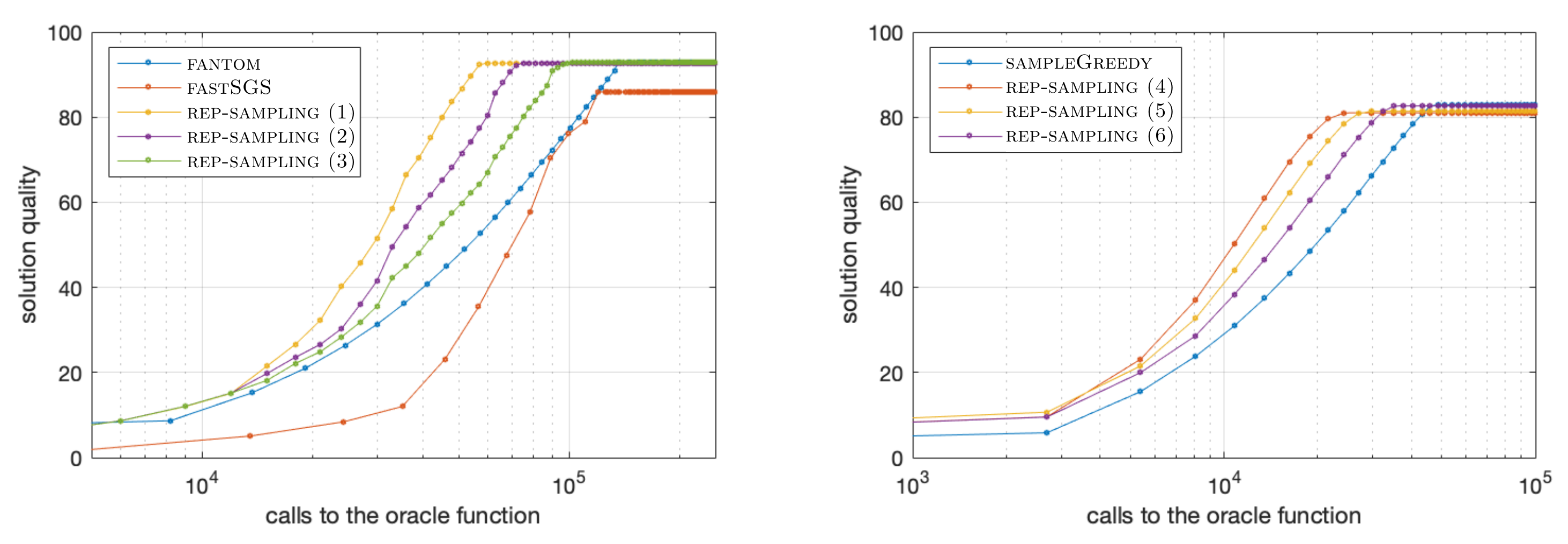}
\caption{Figure 2: Solution quality for fixed time budget for Bayesian D-Optimality. Results for the randomized algorithms are the average over $100$ independent runs. Note that the $x$-axis of both plots uses a logarithmic scale. For a limited number of calls to the oracle function, the \kas{} yields best performance. The \kas{} \textsc{(1)-(3)} use parameters $m = 1 + \lceil \sqrt{(p + 1)/2} \rceil , \varphi_1 = 1, \varphi_2 = 0.5$ and $\varepsilon = 0.7,0.5,0.3$ respectively. The \kas{} \textsc{(4)-(6)} use parameters $m = 1 , \varphi_1 = (p + 1)^{-1}, \varphi_2 = 1$ and, again, $\varepsilon = 0.7,0.5,0.3$. The \fastSGS{} uses parameters $\ell = \lfloor 2 + \sqrt{p + 1} \rfloor$ and $\varepsilon = 0.25$. \label{fig:2}
}
}
\end{figure*}
%
%
\section{Experimental Framework}
\label{experimental:framework}
In our set of experiments, we implement the \kas{} as describe in Algorithm \ref{alg:ksampl}. We always test our algorithm against these algorithms:
\begin{itemize}
    \item \textbf{\fantom{}}. This algorithm, which iterates a density greedy algorithm multiple times, is studied in \citet{DBLP:conf/wine/GuptaRST10} and \citet{DBLP:conf/icml/MirzasoleimanBK16}. 
    \item \textbf{\greedy{}}. This algorithm, studied in \cite{DBLP:conf/colt/FeldmanHK17}, consists of iterating a greedy algorithm multiple times. It uses Algorithm 1 in \cite{DBLP:journals/siamcomp/BuchbinderFNS15} as a sub-routine.
     \item \textbf{\fastSGS{}}. This algorithm is studied in \cite{DBLP:journals/corr/abs-2009-13998}, and it is essentially a fast implementation of the \simGreedy{} \cite{DBLP:journals/corr/abs-2009-13998}. This algorithm updates multiple solutions concurrently, and it picks the best of them.
    \item \textbf{\samplegreedy{}}. This algorithm is specifically designed to handle $p$-extendible systems (see \cite{DBLP:conf/colt/FeldmanHK17}). This algorithm samples points independently at random, and then it builds a greedy solution over the resulting set.
\end{itemize}
Note that these algorithms only require access to the independence oracle for the side constraints. In our experiments we do not consider algorithms that require access to the rank oracle, since they are impractical for our applications. We perform two sets of experiments, on the following applications:
\begin{itemize}
    \item \textbf{Video Summarization}. This problem asks to find a set of representative frames for a given video. We use Determinantal Point Processes to select a diverse set of frame. In order to get better summaries, we employ a face-recognition tool to identify faces in each segment. This experiment is described in Section \ref{section:DPP}, and the results are displayed in Figure \ref{fig:1}.
    \item \textbf{Bayesian D-Optimality}.
    Here, the goal is to design an experiment that maximizes the expected utility of the outcome, using preliminary observations. We use observations from the Berkeley Earth data-set to select thermal stations around the world, to measure the temperature with. This experiment is described in Section \ref{sec:bayesian_design}, and the results are displayed in Figure \ref{fig:2}.
\end{itemize}
The code and the datasets are available upon request.
%
%
\section{Video Summarization}
\label{section:DPP}
We study an application of our setting to a data summarization task: Given a video consisting of ordered frames, choose a subset of frames that gives a descriptive overview of the video. An effective way to select a diverse set of items is to apply Determinantal Point Processes \citep{macchi_1975}. For a thorough survey on Determinantal Point Processes and their applications, we refer the reader to \citet{DBLP:journals/ftml/KuleszaT12}. 

For a set of items $V = \{ 1, \dots, n \}$, a Determinantal Point Process (DPP) defines a discrete probability distribution over all subsets $S \subseteq V$ as $\pr{}{S} = \det_L(S)/\det(L + I)$, where $L \in \mathbb{R}^{n \times n}$ is a positive semidefinite matrix, $\det_L(S) \coloneqq \det((L_{i,j})_{i,j \in S})$ is the determinant of the sub-matrix of $L$ indexed by $S$ and $I$ is the $n \times n$-identity matrix. Intuitively, if $L$ expresses pairwise similarity, then the DPP prefers diversity.

In our setting, each item corresponds to a frame of a video segment. For each frame $i$, we compute a feature vector $\mathbf{f}_i$, consist of both visual features, such as color and SIFT features \citep{DBLP:conf/icml/KuleszaT11}, and qualitative information, such as size, colorfulness and luminosity. Following \citet{DBLP:conf/nips/GongCGS14}, we parameterize $L$ as $L_{i, j} \coloneqq z_i^{T} W^{T} W z_j$,
where $z_i \coloneqq \tanh (U \mathbf{f}_i)$. We then learn the parameters $U$ and $W$ using a neural network.

We select a representative summary by maximizing the function $\log \mbox{det}_L (S)$.
We impose the following additional side constraints. First, we impose an upper-bound on the maximum number of frames of each summary. Then, we partition each video into segments, and define a partition matroid to select at most $\ell_j$ frames in each segment $j$. Following \cite{DBLP:conf/aaai/MirzasoleimanJ018,DBLP:conf/nips/FeldmanK018}, we also use a face-recognition tool to identify actors in each movie, and select a summary containing at most $k_i$ frames showing face $i$. This additional constraint corresponds to a $p$-system $\ind{} = \{S \subseteq V \colon \absl{S \cap V_i} \leq k_i \}$, with $V_i$ all frames containing face $i$.\footnote{\label{note_p}The parameter $p$ is estimated by counting the total number of distinct faces $i$ that appear in more than $k_i$ frames.} In our experiments, the parameters $k_i$ are always set to a fixed constant for all videos. Hence, the only variable that affects $p$ is the total number of distinct faces in each movie.

For our experimental investigation, we use movies from the Frames Labeled In Cinema (FLIC) data-set \citep{modec13}. We consider all movies in this data-set with at least $200$ frames, as to highlight performance when dealing with large problem size.

The results are displayed in Figure \ref{fig:1}, where we describe the parameter choice for each algorithm. For each non-deterministic algorithm, results are the sample mean of $100$ independent runs. We observe that, for different parameter choice, our algorithm outperforms \fantom{} and the \fastSGS{}, and it has better adaptivity than the greedy algorithms. The solution quality for the \samplegreedy{} and \kas{}, with parameters as in Lemma \ref{thm:extendible}, is worse on these instances. 
%
%
%
%
\section{Bayesian D-Optimality}
\label{sec:bayesian_design}
Bayesian experimental design provides a general framework to select a set of experiments, that maximize the expected utility of the outcome. Formally, we want to estimate the parameter $\theta$ of a function $y = f_{\theta}(x) + w$, where $w$ is an error. In this framework, the input $x$ is generated by a set of experiments. Assuming that parameters are equipped with a prior, Bayesian optimality criteria are useful in identifying the right experiments to perform, in order to generate the input $x$. 

We focus on linear regressions of the form
$y = \theta^{T} X + w$,
with $y, w \in \mathbb{R}^n$, $\theta \in \mathbb{R}^m$ and $X \in \mathbb{R}^{m \times n}$. Furthermore, we assume independent and homoscedastic noise. We approach experimental design with the D-optimality criterion, although other methods can be used to this end \citep{DBLP:journals/jmlr/KrauseSG08}. This criterion consists of maximizing the determinant of the Fisher information matrix. As shown in \citet{Sebastiani_Winn}, for regressions as described above the D-optimality criterion is equivalent to maximizing the entropy.

We apply the Bayesian D-optimality criterion to the following setting. Consider a data-set consisting of monthly temperatures measured by thermal stations at different locations, over a period of time. We want to collect data to perform a regression for a model explaining the temperature variation from one measurements to the other one. Here, collecting temperatures with a single station corresponds to performing an experiment, and the goal is to identify appropriate stations to perform future measurements with.

Assuming independent and homoscedastic noise, we search for a feasible set of stations maximizing the entropy. Since temperature variation series follow a Gaussian process \citep{DBLP:journals/jmlr/KrauseSG08,DBLP:conf/aaai/000100QR19}, the entropy is defined as $\mathcal{H}(S) = \frac{1 + \ln (2\pi)}{2} \absl{S} + \frac{1}{2} \ln \mbox{det}_\Sigma (S)$, with $S $ a subset of stations, and $\Sigma $ the covariance matrix. The function $\mbox{det}_\Sigma (S)$ is the determinant of the covariance matrix corresponding to a set $S$ of stations. Note that the function $\mathcal{H}(S)$ is submodular and non-monotone. We consider an upper-bound on the solution size as a side constraint. Furthermore, we group stations that are located in the same geographical area, and we impose an upper-bound on the number of stations that can be chosen in each group. This additional constraint is useful when stations are not distributed uniformly across a territory (see \citet{DBLP:conf/aaai/000100QR19}). In our experiment, geographical areas correspond to continents. We remark that, if only a single cardinality constraint is given, then Bayesian optimality criteria can be optimized well via regularized Determinantal Point Processes \citep{DBLP:conf/aistats/DerezinskiLM20}. 

For our experiments we consider the Berkeley Earth climate data-set (http://berkeleyearth.org/data/). This data-set combines $1.6$ billion temperature reports from $16$ preexisting data archives, for over $39,000$ unique stations worldwide. We run all algorithms with parameters described as in Figure \ref{fig:2} for a fixed time budget. 

In Figure \ref{fig:2}, we report on the average solution quality achieved by each algorithm, after a fixed number of oracle calls. We observe that the \kas{} gets to a good solution more quickly that the other algorithms. 
All algorithms find similar solution qualities, for unlimited time budget, with the $\fantom{}$ and $\kas{}$. slightly outperforming the other algorithms.
%
%
\section{Conclusion}
\label{sec:conclusion}
In this paper, we develop the first algorithm for non-monotone submodular maximization under $p$-system and $p$-extendible system side constraints, with poly-logarithm adaptivity (see Lemma \ref{thm:specific} and Theorem \ref{thm:extendible}). This algorithm also competes with previous known results in terms of the query complexity and approximation guarantee (see Table \ref{table:results}).\\
We consider two applications and study the performance of our algorithm against several other algorithms suitable for this problem. We observe that our algorithms has superior adaptivity, and that it competes in terms of the query complexity (see Figure \ref{fig:1}-\ref{fig:2}).
\section{Acknowledgements}
\label{sec:acknowledgements}
We would like to thank Christopher Weyand for helping the authors to develop some of the code used for the experiments. We would like to thank Martin Schirneck for useful discussion on previous related work.
This research has been partly funded by the Federal Ministry of Education and Research of Germany in the framework of KI-LAB-ITSE (project number 01IS19066).

\bibliographystyle{plainnat}
\bibliography{bibliography}

\newpage
%
%
\onecolumn

\section*{Appendix}

%
%
\appendix
\section{Notation}
Here, we discuss the notation that will be used throughout the proofs in the Appendix, which is also discussed in the main body of the paper.\\
For any submodular function $f\colon 2^V \rightarrow \mathbb{R}$ and sets $S, U \subseteq V$, we define the marginal value of $S$ with respect to $U$ as $f (U \mid S) = f(S\cup U) - f(S)$. Note that, if $f$ only attains non-negative values, it holds that $f(U) \geq f(U \mid S)$ for all $S, U \subseteq V$.\\
We always use the notation introduced in Problem \ref{problem}, and we always denote with $\opt$ a solution to Problem \ref{problem}. Furthermore, we always denote with $n$ the size of the ground set $V$, i.e., $n$ is the number of singletons in our solution space. We denote with $r$ the maximum size of any feasible solution in $\ind{}$. This quantity is sometimes called the rank.\\
We denote with $m, \varepsilon$ the parameters as in Algorithm \ref{alg:ksampl}. Sets $\Omega_j$ and $\Lambda_j$  are as in Algorithm \ref{alg:ksampl}, and we denote with $V_j $ the ground set for the \asplusplus{} algorithm, during the $j$-th iteration of the for-loop lines \ref{for:kas_begin}-\ref{for:kas_end} of Algorithm \ref{alg:ksampl}. 
Furthermore $\delta_0, \delta$, and $\lambda$ are as in Algorithm \ref{alg:samplplusplus} and Algorithm \ref{alg:ksampl}. We denote with $n$ the problem size, and we denote with $r$ the maximum size of a feasible solution in $\ind{}$.

%
%
\section{Proof of Lemma \ref{lemma00}}
\label{appendix:A}
In this section, we prove a lemma that is useful to prove the desired approximation guarantee. Throughout the section, we always implicitly assume that parameters for Algorithm \ref{alg:samplplusplus} and Algorithm \ref{alg:ksampl} are as in Theorem \ref{thm1}, i.e., $\varphi_1 = 1$ and $\varphi_2 = 1/2$. The following lemma holds.
\begin{restatable}{lemma}{constantApprox}
\label{lemma00}
It holds $(p + 1)\expect{}{f(\Omega_j)} + \lambda r \expect{}{f(\Omega_j)}   \geq (1 - \varepsilon)^2 \expect{}{f(\Omega_j \cup (\opt \cap V_j)}$, for all sets $\Omega_j$.
\end{restatable}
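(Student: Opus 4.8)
The plan is to bound the marginal gain $f((\opt \cap V_j) \mid \Omega_j) = f(\Omega_j \cup (\opt \cap V_j)) - f(\Omega_j)$ of the optimum's restriction over the set $\Omega_j$ returned by \asplusplus{}, and to charge it against $f(\Omega_j)$ using the $p$-system structure together with the threshold and sampling guarantees of Algorithm~\ref{alg:samplplusplus}. First I would invoke submodularity (subadditivity of marginals) to write $f((\opt \cap V_j) \mid \Omega_j) \le \sum_{e} f(e \mid \Omega_j)$, where $e$ ranges over $(\opt \cap V_j) \setminus \Omega_j$, and split these elements into two groups: the \emph{blocked} elements $B$, for which $\Omega_j \cup \{e\} \notin \ind{}$, and the \emph{available} elements $A$, for which $\Omega_j \cup \{e\} \in \ind{}$. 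Available elements exist only because \asplusplus{} halts once its threshold drops below $\delta_0$, so $\Omega_j$ need not be a base of $\ind{}\mid_{V_j}$.

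For the available set $A$ I would use the stopping rule of the outer while-loop. When Algorithm~\ref{alg:samplplusplus} terminates, the last processed threshold $\delta$ satisfies $\delta_0 \le \delta < \delta_0/(1-\varepsilon)$, and after its inner loop every feasible element has marginal below $\delta$; hence each $e \in A$ has $f(e \mid \Omega_j) < \delta_0/(1-\varepsilon) = \lambda f(X)/(1-\varepsilon)$. Since $A \subseteq \opt$ is feasible, $|A| \le r$, which yields $\sum_{e \in A} f(e \mid \Omega_j) < \lambda r f(X)/(1-\varepsilon)$ and is the source of the $\lambda r$ term.

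For the blocked set $B$, I would order $\Omega_j = (o_1, \dots, o_k)$ by insertion and assign each $e \in B$ to the index $\phi(e)$ at which it first becomes infeasible, i.e.\ $\{o_1,\dots,o_{\phi(e)-1}\} \cup \{e\} \in \ind{}$ while $\{o_1,\dots,o_{\phi(e)}\} \cup \{e\} \notin \ind{}$. The third $p$-system axiom gives the standard exchange bound that at most $p$ elements are assigned to each $o_i$ (the combinatorial core, as in the greedy-for-$p$-systems analyses and in \citet{DBLP:journals/jcss/KarpUW88}). To bound the marginal of a charged $e$, I note that just before the threshold $\delta^{(i)}$ at which $o_i$ was inserted, $e$ was still feasible with respect to a subset of $\{o_1,\dots,o_{i-1}\}$; the inner-loop invariant at the previous threshold $\delta^{(i)}/(1-\varepsilon)$ then forces $f(e \mid \{o_1,\dots,o_{i-1}\}) < \delta^{(i)}/(1-\varepsilon)$, and submodularity gives $f(e \mid \Omega_j) \le f(e \mid \{o_1,\dots,o_{i-1}\})$. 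Summing over the at-most-$p$ charges per index yields $\sum_{e \in B} f(e \mid \Omega_j) \le \frac{p}{1-\varepsilon} \sum_i \delta^{(i)}$.

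The final ingredient, and the main obstacle, is to relate $\sum_i \delta^{(i)}$ back to $f(\Omega_j)$. Here the adaptive-sampling design enters: generalizing \citet{DBLP:conf/stoc/BalkanskiRS19}, the random sequencing in \rndseq{} together with the binary-search stopping rule $|X_\eta| < (1-\varepsilon)|X|$ should guarantee that every element added by \asplusplus{} has expected marginal at least $(1-\varepsilon)$ times the prevailing threshold, so that $\mathbb{E}[\sum_i \delta^{(i)}] \le (1-\varepsilon)^{-1}\mathbb{E}[f(\Omega_j)]$; since $f(X) = \delta^{(1)} \le \sum_i \delta^{(i)}$ pointwise, this also controls the available-set term. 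Taking expectations of the three estimates then gives $\mathbb{E}[f(\Omega_j \cup (\opt \cap V_j))] \le \big(1 + \tfrac{p + \lambda r}{(1-\varepsilon)^2}\big)\,\mathbb{E}[f(\Omega_j)]$, and because $(1-\varepsilon)^2 \le 1$ the right-hand side is at most $\tfrac{p+1+\lambda r}{(1-\varepsilon)^2}\,\mathbb{E}[f(\Omega_j)]$, which is the claim after clearing denominators. I expect the delicate step to be making the probabilistic expected-marginal estimate rigorous, in particular handling the conditioning induced by the binary-search stopping index and the block structure of the insertions, whereas the deterministic $p$-system charging and the threshold-granularity bound are routine.
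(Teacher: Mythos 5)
Your overall strategy matches the paper's proof: you rely on (i) the threshold invariant that every still-feasible element has marginal value at most $\delta/(1-\varepsilon)$ with respect to the current solution (the paper's Lemma~\ref{lemma2}), (ii) the guarantee that each inserted element has expected marginal value at least $(1-\varepsilon)\delta$, which is exactly the ``delicate probabilistic step'' you defer (the paper's Lemma~\ref{lemma20}, proved via the uniform-randomness of \rndseq{} and the \binarysearch{} stopping rule), and (iii) a charging argument against the $p$-system axiom, with the leftover elements of $\opt \cap V_j$ paid for by $r\delta_0 = \lambda r f(X)$. Your decomposition into blocked versus available elements differs only cosmetically from the paper's split of $\opt \cap V_j$ into $W_{\delta_0}$ (elements with marginal at least $\delta_0$) and its complement, and your constants recombine to the stated inequality.

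There is, however, one genuine gap: the claim that the third $p$-system axiom gives ``at most $p$ elements assigned to each $o_i$.'' That per-index bound is precisely the defining axiom of $p$-\emph{extendible} systems, which the paper emphasizes are strictly less general than $p$-systems; it does not hold for general $p$-systems. What the $p$-system axiom yields is only the \emph{cumulative} bound: writing $D_i$ for the optimum elements first blocked by the insertion of $o_i$, the set $\{o_1, \dots, o_i\}$ is a maximal independent subset of $\{o_1, \dots, o_i\} \cup D_1 \cup \dots \cup D_i$ while $D_1 \cup \dots \cup D_i \subseteq \opt$ is independent, so $\absl{D_1} + \dots + \absl{D_i} \leq p\,i$ for every prefix $i$. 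A single insertion may block far more than $p$ optimum elements as long as the prefix sums stay below $p\,i$ (e.g.\ $D_1 = \emptyset$ and $\absl{D_2} = 2p$ is consistent with the axiom). Your target inequality $\sum_{e \in B} f(e \mid \Omega_j) \leq \frac{p}{1-\varepsilon} \sum_i \delta^{(i)}$ is still true, but deducing it from the cumulative bound requires the additional observation that the insertion-time thresholds $\delta^{(i)}$ are non-increasing in $i$, and then an application of the rearrangement inequality --- Proposition~\ref{prop1} in the paper, applied with $x_i = \absl{D_i}/p$ and $y_i = \delta^{(i)}$. This is exactly how the paper closes the step; without it, your chain of inequalities does not go through in the $p$-system setting that the lemma is about.
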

On a high level, we prove that $\delta$ is an upper-bound for the best possible improvement up to a multiplicative constant, and that the marginal contribution of any point added to the current solution does not exceed $\delta$ in expected value, up to a multiplicative constant. We then combine this fact with the defining properties of the $p$-system to prove the claim, which holds for non-monotone functions. With this lemma, we prove that Algorithm \ref{alg:ksampl} yields, in expectation, a constant-factor approximation guarantee for Problem \ref{problem}. 
%
%
\subsection{Preliminary Results}
\label{subsecAppendix:preliminaryResults}
In order to prove Lemma \ref{lemma00}, we need the following technical proposition.
\begin{proposition}[Proposition 2.2 in \citet{inbook}]
\label{prop1}
Let $\{x_1, \dots, x_m\}, \{y_1, \dots, y_m\}$ be two sequences of non-negative real numbers. Suppose that it holds
\begin{equation*}
\sum_{j = 1}^i x_i \leq i,
\end{equation*}
for all $i \in [m]$ and $y_{i} \geq y_{i + 1}$ for all $i \in [m - 1]$. Then
\begin{equation*}
\sum_{j = 1}^m y_i \geq \sum_{j = 1}^m x_i y_i.
\end{equation*}
\end{proposition}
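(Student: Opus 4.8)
The plan is to recognize this as a classic summation-by-parts (Abel summation) inequality and reduce it to showing that a single rearranged sum is a sum of manifestly nonnegative terms. First I would restate the goal in the equivalent form $\sum_{i=1}^m (1 - x_i)\,y_i \ge 0$, which is exactly $\sum_{i=1}^m y_i - \sum_{i=1}^m x_i y_i \ge 0$. This reformulation isolates the quantity $1 - x_i$, whose partial sums carry all the information in the hypothesis.

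Next I would introduce the partial sums $S_i \coloneqq \sum_{j=1}^i (1 - x_j) = i - \sum_{j=1}^i x_j$ for $i \in [m]$, together with the convention $S_0 = 0$. The assumption $\sum_{j=1}^i x_j \le i$ (interpreting the hypothesis as a bound on the partial sums of the $x_j$) gives precisely $S_i \ge 0$ for every $i \in [m]$. I would then apply Abel summation to the sequence $a_i = 1 - x_i$ against the weights $y_i$: writing $a_i = S_i - S_{i-1}$ and reindexing one of the resulting sums yields the identity
\begin{equation*}
\sum_{i=1}^m (1 - x_i)\,y_i \;=\; S_m\,y_m \;+\; \sum_{i=1}^{m-1} S_i\,(y_i - y_{i+1}).
\end{equation*}

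Finally I would conclude by a sign check on the right-hand side. The boundary term $S_m y_m$ is nonnegative because $S_m \ge 0$ and $y_m \ge 0$; each term $S_i\,(y_i - y_{i+1})$ in the sum is nonnegative because $S_i \ge 0$ and, by the monotonicity hypothesis $y_i \ge y_{i+1}$, the factor $y_i - y_{i+1} \ge 0$. Hence the entire expression is nonnegative, which is exactly the desired inequality $\sum_{i=1}^m y_i \ge \sum_{i=1}^m x_i y_i$.

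This argument is elementary, so there is no deep obstacle; the only place requiring care is the bookkeeping in the summation-by-parts step, namely correctly handling the index shift and the boundary contributions at $i = 0$ and $i = m$ so that the telescoped form comes out with the clean nonnegative structure above. I would verify the identity by expanding both sums once explicitly (using $S_0 = 0$ to drop the spurious boundary term) before invoking the sign argument, to make sure no term with an indefinite sign survives.
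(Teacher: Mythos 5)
Your proof is correct, and it is the standard argument for this fact: the paper itself gives no proof, simply citing Proposition 2.2 of \citet{inbook}, whose proof is precisely this Abel-summation computation --- setting $S_i = i - \sum_{j \le i} x_j \ge 0$ and writing $\sum_{i=1}^m (1-x_i)y_i = S_m y_m + \sum_{i=1}^{m-1} S_i\,(y_i - y_{i+1})$, with every term nonnegative. You also correctly read through the index typos in the statement as printed (the hypothesis should be $\sum_{j=1}^i x_j \le i$ and the conclusion $\sum_{i=1}^m y_i \ge \sum_{i=1}^m x_i y_i$), which is the intended meaning.
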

%
%
\subsection{Additional Lemmas}

We first prove the following result, which follows from Lemma 2 in \citet{DBLP:conf/stoc/BalkanskiRS19}. 
\begin{lemma}
\label{lemma2}
Fix an index $j$, and let $\delta, \Omega$ be as in Algorithm \ref{alg:samplplusplus}, as it runs over the set $V_j$. Then it holds
\begin{equation*}
\delta \geq (1 - \varepsilon ) \sup_{\{e \in V_j \setminus \Omega \colon \Omega \cup e \in \ind{}\}} f (e \mid \Omega).
\end{equation*}
\end{lemma}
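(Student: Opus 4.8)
The plan is to show that throughout the execution of Algorithm~\ref{alg:samplplusplus} over $V_j$ the candidate set $X$ equals, at every moment, the collection of \emph{all} feasible elements whose marginal with respect to the current partial solution $S$ reaches the current threshold $\delta$. Reading this invariant off at the instant the inner while-loop (lines~\ref{alg:sampl:beginWhile2}--\ref{alg:sampl:endWhile2}) empties $X$, and then accounting for the single multiplicative decrease of $\delta$, yields the claimed bound.

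First I would prove, by induction on the updates of $S$ and $X$, the invariant
\begin{equation*}
X = \{ e \in V_j \colon f(e \mid S) \geq \delta \ \land\ S \cup e \in \ind{} \}.
\end{equation*}
The base case is immediate, since $X$ is assigned exactly this form at line~\ref{alg:sampl:defX} (and, up to ties, at initialization with $S = \emptyset$ and $\delta = \max_e f(e)$). For the inductive step, note that within one pass of the inner loop the solution is updated to $S' = S \cup A$ with $A = \{a_1,\dots,a_{\eta-1}\}$ (as $\varphi_1 = 1$), and $X$ is replaced by $X_\eta = \{ e \in X \colon f(e \mid S') \geq \delta \land S' \cup e \in \ind{}\}$. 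It suffices to check that no element outside the previous $X$ can newly qualify for $S'$: if $e \in V_j$ has $f(e \mid S') \geq \delta$ and $S' \cup e \in \ind{}$, then submodularity gives $f(e \mid S) \geq f(e \mid S') \geq \delta$ since $S \subseteq S'$, and the downward-closure axiom of the $p$-system gives $S \cup e \in \ind{}$ since $S \cup e \subseteq S' \cup e \in \ind{}$. Hence every such $e$ already belonged to the previous $X$, so $X_\eta$ coincides with the full set described by the invariant for $S'$.

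With the invariant in hand, I would finish as follows. The inner loop terminates only with $X = \emptyset$; by the invariant, at that instant no feasible extension $e \notin \Omega$ of $\Omega := S$ satisfies $f(e \mid \Omega) \geq \delta^\star$, where $\delta^\star$ denotes the threshold at that moment. Because $V_j$ is finite (and the supremum over an empty family of extensions is non-positive), this means $\sup_{\{e \in V_j \setminus \Omega \colon \Omega \cup e \in \ind{}\}} f(e \mid \Omega) < \delta^\star$. The only modification of the threshold before $\Omega$ is finalized is the decrease $\delta \gets (1-\varepsilon)\delta$ performed immediately after the inner loop, which leaves $\Omega$ unchanged; hence the final threshold satisfies $\delta = (1-\varepsilon)\delta^\star > (1-\varepsilon)\sup_{e} f(e \mid \Omega)$, which is the assertion.

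The delicate point, and the step I expect to require the most care, is the inductive maintenance of the invariant: verifying that the replacement $X \gets X_\eta$ never discards an element that could re-qualify later. This is precisely where submodularity (marginals only shrink as $S$ grows) and downward-closure (feasibility is preserved under restriction) must be invoked together. The degenerate case in which the outer while-loop never executes should be dispatched separately, but there $\Omega = \emptyset$ and $\delta = \max_e f(e) = \sup_e f(e \mid \emptyset)$, so the inequality holds trivially.
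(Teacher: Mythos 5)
Your proposal is correct and follows essentially the same route as the paper's proof: an induction over the iterations of Algorithm~\ref{alg:samplplusplus} in which submodularity guarantees that marginals only shrink as $S$ grows and downward-closure guarantees that infeasibility persists, with the $(1-\varepsilon)$ factor arising from the single threshold decrease at the end of each inner loop --- the paper packages this as a case analysis of the iteration at which each discarded element left the pool, while you package it as a set-equality invariant on $X$, which is equivalent. Your acknowledged ``ties at initialization'' caveat is harmless (though worth dispatching explicitly): during the first phase $\delta$ equals the maximum feasible singleton value, which dominates every marginal $f(e \mid \Omega)$ by submodularity, and the invariant is then re-established exactly at line~\ref{alg:sampl:defX} for every subsequent phase.
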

\begin{proof}
We prove the claim by induction on the iterations of Algorithm \ref{alg:samplplusplus}. The base case trivially holds, due to the definition of $\delta$. Suppose that at some point the current solution $\Omega$ is updated to $\Omega \cup \{a_1, \dots, a_{\eta } \}$. We have that 
\begin{align*}
& \sup_{\{e \in V_{j}\setminus (\Omega \cup \{a_1, \dots, a_{\eta} \}) \colon \Omega \cup \{a_1, \dots, a_{\eta }\}\cup e \in \ind{} \}} f  ( e \mid \Omega \cup \{a_1, \dots, a_{\eta} \}) \\
& \leq \sup_{ \{e \in V_{j}\setminus (\Omega \cup \{a_1, \dots, a_{\eta} \}) \colon \Omega \cup \{a_1, \dots, a_{\eta }\}\cup e\in \ind{} \}} f (e \mid \Omega) \\
& \leq \sup_{ \{e \in V_{j}\setminus \Omega \colon \Omega \cup e \in \ind{} \}} f (e \mid \Omega) \\
& \leq \delta,
\end{align*}
where the first inequality uses the submodularity property of $f$; the second inequality holds since $\{e \in V_{j}\setminus (\Omega \cup \{a_1, \dots, a_{\eta} \}) \colon \Omega \cup \{a_1, \dots, a_{\eta }\}\cup e\in \ind{} \} \subseteq \{e \in V_{j}\setminus \Omega \colon \Omega \cup e \in \ind{} \}$; the last inequality follows due to the inductive hypothesis. 

We now show that the claim holds when $\delta$ is updated to $\delta' = (1 - \varepsilon )\delta$. At this point, it holds $X = \emptyset$, and each point $e^* \in V_j \setminus \Omega$ such that $\Omega \cup e^* \in \ind $ was discarded during a previous iteration. Denote with $\Omega'$ the solution at the iteration when $e^*$ was discarded, and let $\{a_1, \dots, a_{\eta}\}$ be the next set of points added to $\Omega'$. Since $e^*$ was discarded, then one of the following two conditions must hold:
\begin{enumerate}[align=left]
    \item $\Omega' \cup \{a_1, \dots, a_{\eta}\} \cup e^* \notin \ind{}$;
    \item $ f (e^* \mid \Omega' \cup \{a_1, \dots, a_{\eta}\} ) \leq \delta $.\label{last-item2}
\end{enumerate}
In the first case, due to the downward-closed property of $\ind{}$ it holds $\Omega \cup e^* \notin \ind{}$, which contradicts the definition of $e^*$. In the latter case it holds
\begin{equation*}
f(e^* \mid \Omega) \leq f (e^* \mid \Omega'\cup \{a_1, \dots, a_{\eta}\}) \leq \delta = \frac{\delta'}{1 - \varepsilon}
\end{equation*}
where the first inequality uses the fact that $f$ is submodular. The claim follows.
\end{proof}
%
%
We also need an additional lemma, to prove that Algorithm \ref{alg:ksampl} gives a constant-factor approximation for Problem \ref{problem}. 
%
%
\begin{lemma}
\label{lemma20}
Fix an index $j$, and let $\delta, X$ be as in Algorithm \ref{alg:samplplusplus}, as it optimizes the function $f$ over the set $V_j$. Denote with $\{a_i\}_i$ the points of $\Omega_j$,  sorted as they were added to it. It holds
\begin{equation*}
    \expect{a_i}{f(a_i \mid \{a_1, \dots , a_{i - 1} \})} \geq (1 - \varepsilon) \delta.
\end{equation*}
\end{lemma}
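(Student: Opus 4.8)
The plan is to prove the bound one added point at a time, exploiting the random order in which \rndseq{} produces its base extension. Fix the step at which $a_i$ is appended and condition on the current solution $T_{i-1} = S \cup \{a_1, \dots, a_{i-1}\}$ (so the marginal in the statement is taken with respect to $T_{i-1}$). The first step is to argue that, conditioned on this history, $a_i$ is distributed uniformly over the feasible pool $Z_i = \{e \in X \colon e \notin T_{i-1} \land T_{i-1} \cup e \in \ind{}\}$. This follows from the random-permutation structure of \rndseq{}: whenever a maximal feasible prefix is extended, or a fresh random order is drawn after a prefix terminates, the next point appended is the first element of a uniformly random ordering of the points still feasible with $T_{i-1}$, hence uniform over $Z_i$. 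I would record this as the structural backbone of the argument, mirroring the \citet{DBLP:journals/jcss/KarpUW88} analysis that \rndseq{} is based on.

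The second step is the good-fraction estimate. Let $X_i = \{e \in X \colon f(e \mid T_{i-1}) \geq \delta \land T_{i-1} \cup e \in \ind{}\}$ be the set of pool elements that are still \emph{good} after $a_1, \dots, a_{i-1}$ have been committed; this is exactly the quantity tracked by the \binarysearch{} subroutine. Because $a_i$ belongs to the committed batch $\{a_1, \dots, a_{\eta - 1}\}$ (recall $\varphi_1 = 1$), we have $i < \eta$, and the definition of $\eta$ as $\min\{j \colon \absl{X_j} < (1 - \varepsilon)\absl{X}\}$ gives $\absl{X_i} \geq (1 - \varepsilon)\absl{X}$. Since $\delta > 0$ forces every element of $X_i$ to lie outside $T_{i-1}$, we have the inclusions $X_i \subseteq Z_i \subseteq X$, whence $\absl{X_i}/\absl{Z_i} \geq (1-\varepsilon)\absl{X}/\absl{X} = 1 - \varepsilon$. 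Thus a uniformly random $a_i \in Z_i$ lands in $X_i$ with probability at least $1 - \varepsilon$, and on that event its marginal is at least $\delta$ by the definition of $X_i$. Combining the two steps yields the target inequality $\expect{a_i}{f(a_i \mid T_{i-1})} \geq (1 - \varepsilon)\delta$.

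The main obstacle is the non-monotonicity of $f$. The bound obtained above, namely the probability of a good draw multiplied by $\delta$, only controls the contribution of the event $a_i \in X_i$; it leaves the event $a_i \in Z_i \setminus X_i$, on which the marginal $f(a_i \mid T_{i-1})$ is merely below $\delta$ and, for a non-monotone objective, can be strictly negative. To make the conclusion rigorous I would have to show that these below-threshold draws do not pull the conditional expectation under $(1 - \varepsilon)\delta$ --- for instance by verifying that the aggregate $\sum_{e \in Z_i \setminus X_i} f(e \mid T_{i-1})$ is non-negative under the invariant maintained by Algorithm \ref{alg:samplplusplus}, or by rewriting the quantity so that the negative part is absorbed. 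This is precisely where the argument departs from the monotone template of \citet{DBLP:conf/stoc/BalkanskiRS19}, in which marginals are automatically non-negative, and I expect it to be the delicate part of the proof; the uniform-sampling and good-fraction steps above are comparatively routine.
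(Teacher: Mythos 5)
Your proposal is essentially the paper's own proof. Your steps (i) and (ii) --- that, conditioned on the history, $a_i$ is uniformly distributed over the feasible pool $X_{i-1}^{\ind{}} = \{e \in X \colon \{a_1, \dots, a_{i-1}\} \cup e \in \ind{}\}$ (via the random-permutation structure of \rndseq{} and downward-closure), and that the sub-pool of points with marginal at least $\delta$ constitutes at least a $(1-\varepsilon)$ fraction of it (via the \binarysearch{} threshold together with the nestedness $X_{j+1} \subseteq X_{j}$) --- are exactly the two steps of the proof in the appendix, modulo an off-by-one in the paper's indexing and a reversed inclusion ($X_{i-1}^{\ind{}} \subseteq X_{i-1}$ where $X_{i-1} \subseteq X_{i-1}^{\ind{}}$ is meant).

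The step you flagged as delicate and left open is, in the paper, dispatched by a single unproved line: the proof ends with the assertion that $\expect{a_i}{f(a_i \mid \{a_1, \dots , a_{i - 1} \})} \geq \pr{a_i}{f(a_i \mid \{a_1, \dots , a_{i - 1} \}) > \delta }\, \delta$. This is precisely the inequality you declined to assert, and it is valid only if the below-threshold draws contribute non-negatively in expectation; for non-monotone $f$, a point $e$ that satisfied $f(e \mid S) \geq \delta$ when $X$ was formed can have $f(e \mid S \cup \{a_1, \dots, a_{i-1}\}) < 0$ by the time it is sampled, and nothing in the paper bounds this negative mass. So you have not missed an idea that the paper supplies: you have located the one step of the published argument that is asserted rather than proved. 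A genuinely complete proof would still need to close that hole (for instance by controlling the aggregate marginal of the at most $\varepsilon\absl{X}$ bad points in the pool), and the paper offers no such argument.
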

\begin{proof}
First, suppose that the constant $\delta$ is updated after the point $a_{i - 1}$ is added to the current solution. In that case, by definition of the set $X$, every point $e \in X \setminus \{a_1, \dots , a_{i - 1} \}$ such that $\{a_1, \dots , a_{i - 1} \} \cup e \in \ind $ yields $f(e \mid \{a_1, \dots , a_{i - 1} \}) > \delta$. Hence, the claim holds.

Suppose now that the constant $\delta$ is not updated after the point $a_{i - 1}$ is added to the current solution. Define the set $X_{i - 1}^{\ind{}} \coloneqq \{e \in X \colon \{a_1, \dots , a_{i - 1} \} \cup e \in \ind \}$. We first claim that the point $a_i$ is chosen uniformly at random over the set $X_{i - 1}^{\ind{}}$. In fact, if $a_i$ is added to the current solution, then there exists a set $\{x_1, \dots, x_{\eta}\}$ as in Line \ref{alg:sequence:alg} of Algorithm \ref{alg:rndseq} such that $a_i \in \{x_1, \dots, x_{\eta}\}$. Let $j \leq \eta$  be an index such that $a_i = x_j$. Due to the downward-closed property of $\ind{}$ it holds $\{e \colon A \cup \{x_1, \dots, x_{j}, e\} \in \ind{}\} \subseteq X \setminus \{x_1, \dots, x_{j}\}$. Hence,  $a_i$ is chosen uniformly at random among all points $e$ such that $A \cup \{x_1, \dots, x_{j - 1}, e\} = \{a_1, \dots, a_{j - 1}, e\} \in \ind$.

Then 
\[
\pr{a_i}{f(a_i \mid \{a_1, \dots , a_{i - 1} \}) > \delta } \geq \frac{\absl{X_{i - 1}}}{\absl{X_{i - 1}^{\ind{}}}} \geq \frac{\absl{X_{i - 1}}}{\absl{X}},
\]
where we have used that $X_{i - 1}^{\ind{}}\subseteq X_{i - 1}$ and that the point $e$ is chosen uniformly at random over the set $X_{i - 1}^{\ind{}}$. 

We now prove that $\absl{X_{i - 1}}/\absl{X} \geq (1 - \varepsilon)$. To this end, we first note that $\absl{X_{i}} \geq \absl{X_{i + 1}}$, for all indices $i$. Fix a point $e \in X_{i + 1}$. Then, this point yields $\{a_1, \dots, a_i \} \cup a \in \ind{}$ and $f(a \mid \{a_1, \dots, a_i \}) \geq \delta$. By the downward-closed property of $\ind{}$ we get $\{a_1, \dots, a_{i-1} \} \cup a \in \ind{}$, and by submodularity we get $f(a \mid \{a_1, \dots, a_{i-1} \}) \geq \delta$. Hence, $X_{i + 1} \subseteq X_i$ and $\absl{X_{i}} \geq \absl{X_{i + 1}}$ as claimed. It follows that the \binarysearch{} sub-routine of Algorithm \ref{alg:samplplusplus} terminates whenever $\eta = \min \{ i \colon \absl{X_i} \leq (1 - \varepsilon ) \absl{X} \}$, which implies that $\absl{X_{i - 1}} > (1 - \varepsilon ) \absl{X}$. The claim follows since it holds $\expect{a_i}{f(a_i \mid \{a_1, \dots , a_{i - 1} \})} \geq \pr{a_i}{f(a_i \mid \{a_1, \dots , a_{i - 1} \}) > \delta } \delta $.
\end{proof}
\subsection{Proof of Lemma \ref{lemma00}}
\label{sec:proof:thm1}
Using Proposition \ref{prop1}, Lemma \ref{lemma2}-\ref{lemma20}, we can prove Lemma \ref{lemma00}. This proof uses ideas in \citet{DBLP:conf/wine/GuptaRST10}.
\begin{proof}[Proof of Lemma \ref{lemma00}]
Denote with $\{a_{i}\}_i$ the points of $\Omega_j$ in the order that they were added to $\Omega_j$. Define the set 
\[
W_{\delta_0} \coloneqq \{e \in \opt \cap V_j \colon f(e \mid \Omega_j) \geq \delta_0 \}.
\]
Note that this set consists of all points of $\opt \cap V_j$ such that their marginal contribution is above $\delta_0$, when added to the current solution. First, fix a set $\Omega_j$, and suppose that $f (a_i \mid \{a_1, \dots, a_{i - 1}\}) \geq \delta$ for all indices $i$. Define the sets $A_{i} \coloneqq \left \{ e \in W_{\delta_0} \setminus \{a_1, \dots, a_i \} \colon \{a_1, \dots, a_i \}\cup e \in \ind{} \right \}$. Since the system $\ind{}$ is downward-closed, then $A_{i} \subseteq A_{i - 1}$. Define the sets $D_{i} \coloneqq A_{i - 1} \setminus A_{i}$. Note that these sets consist of all points in $W_{\delta_0}$ that yield a feasible solution when added to $\{a_1, \dots, a_{i - 1} \}$, but that violate side constraints when added to $\{a_1, \dots, a_i \}$.

We now claim that the set $\left \{a_1, \dots, a_i \right \}$ is a maximal independent set for
\begin{equation*}
\left \{a_1, \dots, a_i \right \} \cup \left ( D_1 \cup \dots \cup D_i \right ) = \left \{a_1, \dots, a_i \right \} \cup \left ( W_{\delta_0} \setminus A_i \right ). 
\end{equation*}
To this end, note that the set $\left \{a_1, \dots, a_i \right \}$ is independent by definition, and that any point $e\in \left ( W_{\delta_0} \setminus A_i \right ) \setminus \left \{a_1, \dots, a_i \right \}$ is such that $\left \{a_1, \dots, a_i \right \} \cup e \notin \ind{}$. Hence $\left \{a_1, \dots, a_i \right \}$ is maximal as claimed. Note also that $D_1 \cup \dots \cup D_i \subseteq W_{\delta_0}$ is an independent set, due to the subset-closure of $\ind{}$. Since $\ind{}$ is a $p$-system, then it holds
\begin{equation}
\label{eq1:claim2:thm1}
    \absl{D_1} + \dots + \absl{D_i } = \absl{D_1 \cup \dots \cup D_i } \leq p \absl{\left \{a_1, \dots, a_i \right \}} = pi.
\end{equation}
Furthermore, using submodularity and Lemma \ref{lemma2} it holds
\begin{align*}
\absl{D_i } \delta & \geq (1 - \varepsilon) \absl{D_i } \mbox{sup}_e f (e \mid \left \{a_1, \dots , a_{i - 1} \right \})\\
& \geq (1 - \varepsilon) f(D_i  \mid \left \{a_1, \dots , a_{i - 1} \right \})\\
& \geq (1 - \varepsilon) f \left (D_i \mid \Omega_j\right ),
\end{align*}
where the last inequality follows from submodularity. Combining this with \eqref{eq1:claim2:thm1} and Proposition \ref{prop1}, we get
\begin{equation*}
    p\sum_i f(a_i \mid \left \{a_1, \dots , a_{i - 1} \right \}) \geq \sum_i \absl{D_i } \delta \geq (1 - \varepsilon) \sum_i f \left (D_i \mid \Omega_j \right ) \geq f\left (W_{\delta_0} \mid \Omega_j  \right ),
\end{equation*}
where the last inequality uses submodularity. Hence, rearranging yields 
$p f(\Omega_j) \geq f(\Omega_j \cup W_{\delta_0}) - f(\Omega_j)  =  f( W_{\delta_0} \mid \Omega_j)$. If we unfix the set $\Omega_j$ and take the expected value, and using Lemma \ref{lemma20} we get 
\begin{equation}
\label{eq:new_one}
p \expect{}{f(\Omega_j)} \geq (1 - \varepsilon)^2 \expect{}{f(W_{\delta_0} \mid \Omega_j)}.
\end{equation}

Using Lemma \ref{lemma20} again, we have that $\expect{a_i}{f(a_i \mid \{a_1, \dots, a_{i - 1} \})} \geq 0$, for all points $a_i$ added to $\Omega_j$. It follows that $\expect{}{f(\Omega_j)} \geq \expect{}{f(a_0)}$, with $a_0$ the first point added to $\Omega_j$. Then, from the definition of $\delta_0$ it follows that $\lambda \expect{}{f(\Omega_j)} \geq \lambda \expect{}{f(a_0)} = \expect{}{\delta_0}$. Hence, using submodularity and the linearity of the expected value, we get 
\begin{equation}
\label{eq:new_two}
\lambda r \expect{}{f(\Omega_j)} \geq r \expect{}{\delta_0} \geq \expect{}{\sum_{e \in (\opt \cap V_j)\setminus W_{\delta_0}} f(e \mid \Omega_j)} \geq \expect{}{f((\opt \cap V_j)\setminus W_{\delta_0} \mid \Omega_j)},
\end{equation}
where we have used submodularity.

Combining \eqref{eq:new_one} with \eqref{eq:new_two} and using submodularity again we get $p \expect{}{f(\Omega_j)} +  \lambda r \expect{}{f(\Omega_j)} \geq (1 - \varepsilon)^2 \expect{}{f(\opt \cap V_j \mid \Omega_j)}$. The claim follows by rearranging.
\end{proof}
%
%
\section{Proof of Theorem \ref{lemma00}}
\label{appendix:B1}
%
%
\subsection{Preliminary results.} 
In our analysis we consider the following well-known result.
\begin{lemma}[Theorem 2.1 in \citet{DBLP:journals/siamcomp/FeigeMV11}]
\label{thm23}
Let $U \subseteq V$ be a set chosen uniformly at random. Then it holds $\expect{}{f(U)} \geq f(O)/4$, with $O \subseteq V$ the subset attaining the maximum $f$-value.
\end{lemma}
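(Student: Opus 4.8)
The final statement is the classical fact that a uniformly random subset is a $1/4$-approximation for unconstrained non-negative submodular maximization. I would prove it through the multilinear extension of $f$. Define $F\colon [0,1]^V \to \mathbb{R}_{\geq 0}$ by $F(x) = \sum_{S \subseteq V} f(S) \prod_{i \in S} x_i \prod_{i \notin S}(1 - x_i)$, so that $F(\mathbf{1}_S) = f(S)$ for every set $S$, and $F(\mathbf{c}) = \mathbb{E}[f(U)]$, where $\mathbf{c}$ denotes the all-$\tfrac12$ vector and $U$ is the uniformly random set (each element included independently with probability $\tfrac12$). The plan is to prove the stronger inequality $F(\mathbf{c}) \geq \tfrac14 f(S)$ for \emph{every} fixed $S \subseteq V$, and then specialize to $S = O$, the maximizer.

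The key tool is that $F$ is concave along every sign-consistent direction. First I would compute the second partial derivatives: for $i \neq j$, the quantity $\partial^2 F / \partial x_i \partial x_j$ equals the expectation, over a random restriction $R$ of the ground set outside $\{i,j\}$, of $f(R \cup \{i,j\}) - f(R \cup \{i\}) - f(R \cup \{j\}) + f(R)$, which is $\le 0$ by submodularity, while $\partial^2 F / \partial x_i^2 = 0$ since $F$ is multilinear. Hence for any direction $d$ whose entries all share one sign, $d^\top (\nabla^2 F)\, d = \sum_{i \neq j} d_i d_j\, \partial^2 F / \partial x_i \partial x_j \le 0$, so $t \mapsto F(x + t d)$ is concave. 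Non-negativity of $f$ moreover gives $F(x) \ge 0$ for every $x \in [0,1]^V$, since each coefficient $f(S)$ is non-negative.

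With these two facts the bound follows from two applications of concavity passing through the intermediate point $z = \tfrac12 \mathbf{1}_S$. Along the non-positive direction $-\mathbf{1}_S$, the map $t \mapsto F((1-t)\mathbf{1}_S)$ is concave, so evaluating at the midpoint gives $F(z) \ge \tfrac12 F(\mathbf{1}_S) + \tfrac12 F(\mathbf{0}) \ge \tfrac12 f(S)$, using $F(\mathbf{0}) \ge 0$. Along the non-negative direction $\mathbf{1}_{V \setminus S}$, the segment from $z$ to $z + \mathbf{1}_{V \setminus S}$ has midpoint exactly $\mathbf{c}$, and concavity gives $F(\mathbf{c}) \ge \tfrac12 F(z) + \tfrac12 F(z + \mathbf{1}_{V \setminus S}) \ge \tfrac12 F(z)$, using non-negativity at the far endpoint. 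Chaining the two estimates yields $F(\mathbf{c}) \ge \tfrac14 f(S)$, and taking $S = O$ proves the lemma.

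The main obstacle is the concavity step: one must argue carefully that a Hessian with non-positive off-diagonal and zero diagonal forces concavity only along sign-consistent directions. The two directions used above, $-\mathbf{1}_S$ and $\mathbf{1}_{V \setminus S}$, are exactly of this type, which is precisely why the argument must be split into two moves through $z$ rather than attempted in one step along $\mathbf{c} - \mathbf{1}_S$, whose entries have mixed signs. Everything else — the identities $F(\mathbf{1}_S) = f(S)$ and $F(\mathbf{c}) = \mathbb{E}[f(U)]$, and the non-negativity of $F$ — is routine. Alternatively, one can bypass the Hessian by directly proving the one-dimensional concavity of $p \mapsto \mathbb{E}[f(R_p(A))]$ for a $p$-random subset of a fixed set $A$ and invoking it twice, but the calculus version above is the most economical.
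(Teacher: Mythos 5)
Your proof is correct. The paper itself supplies no proof of this lemma --- it is quoted directly as Theorem 2.1 of \citet{DBLP:journals/siamcomp/FeigeMV11} --- so the natural comparison is with the original argument there, which is discrete rather than continuous. Feige, Mirrokni, and Vondr\'ak first prove, by induction on the ground set, a sampling lemma (their Lemma 2.2): for independent random samples $A(p)$ and $B(q)$ of sets $A, B \subseteq V$ it holds $\expectt{f(A(p) \cup B(q))} \geq (1-p)(1-q)f(\emptyset) + p(1-q)f(A) + (1-p)q f(B) + pq\, f(A \cup B)$; instantiating $A = O$, $B = V \setminus O$, $p = q = \tfrac{1}{2}$ makes $A(p) \cup B(q)$ exactly the uniform random set $U$, and dropping the three non-negative terms $f(\emptyset)$, $f(V \setminus O)$, $f(V)$ gives $\expectt{f(U)} \geq f(O)/4$. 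Your multilinear-extension argument is the continuous counterpart: the concavity along sign-consistent directions that you derive from the non-positive off-diagonal Hessian entries (with zero diagonal, by multilinearity) is precisely the structural fact powering their induction, and your two concavity steps through $z = \tfrac{1}{2}\mathbf{1}_S$ yield an analogous convex-combination bound, with the discarded terms appearing as the non-negative values $F(\mathbf{0})$ and $F(z + \mathbf{1}_{V \setminus S})$. Your caution about the path is warranted and correctly handled: $\mathbf{c} - \mathbf{1}_S$ has mixed signs, so a single concavity step would indeed be invalid, and splitting into the segments along $-\mathbf{1}_S$ and $\mathbf{1}_{V\setminus S}$ is exactly the right fix. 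The trade-off between the two routes: the FMV induction is more elementary (no calculus, stated purely in the value-oracle model), while your Hessian computation buys reusable machinery, since directional concavity of the multilinear extension is the same property exploited in continuous-greedy-style analyses.
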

Furthermore, we also consider the following properties of submodular functions.
\begin{lemma}[Lemma 10 in \citet{DBLP:conf/colt/FeldmanHK17}]
\label{thm2}
For any fixed $m$-tuple of mutually disjoint sets $\Omega_j$ it holds $(m - 1)f(\opt) \leq \sum_{j \leq m} f(\Omega_j \cup \opt)$.
\end{lemma}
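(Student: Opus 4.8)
The plan is to reduce the inequality to two elementary facts about non-negative submodular functions: subadditivity of marginal values, and non-negativity of $f$ applied once to a single large set. First I would pass from the terms $f(\Omega_j \cup \opt)$ to marginals taken relative to $\opt$. Using the definition $f(\Omega_j \mid \opt) = f(\opt \cup \Omega_j) - f(\opt)$ and summing over $j \leq m$ gives
\[
\sum_{j \leq m} f(\Omega_j \cup \opt) = m\, f(\opt) + \sum_{j \leq m} f(\Omega_j \mid \opt),
\]
so the target inequality is equivalent to the single bound $\sum_{j \leq m} f(\Omega_j \mid \opt) \geq - f(\opt)$, which is all that remains to establish.

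Next I would control this sum of marginals by the marginal of the union. Writing $U_k = \Omega_1 \cup \dots \cup \Omega_k$ with $U_0 = \emptyset$ and telescoping,
\[
f\big( \textstyle\bigcup_{j \leq m} \Omega_j \mid \opt \big) = \sum_{j \leq m} f\big( \Omega_j \mid \opt \cup U_{j-1} \big),
\]
and submodularity forces each marginal to be non-increasing as the conditioning set grows, so that $f(\Omega_j \mid \opt \cup U_{j-1}) \leq f(\Omega_j \mid \opt)$ for every $j$. Summing these term-by-term inequalities yields the subadditivity estimate
\[
\sum_{j \leq m} f(\Omega_j \mid \opt) \;\geq\; f\big( \textstyle\bigcup_{j \leq m} \Omega_j \mid \opt \big).
\]
The stated hypothesis that the sets $\Omega_j$ are mutually disjoint is convenient but not actually needed for this step, since the telescoping goes through for arbitrary sets.

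Finally I would invoke non-negativity of $f$ to convert the union marginal into the desired lower bound, namely
\[
f\big( \textstyle\bigcup_{j \leq m} \Omega_j \mid \opt \big) = f\big( \opt \cup \textstyle\bigcup_{j \leq m} \Omega_j \big) - f(\opt) \geq - f(\opt),
\]
because the first term is non-negative. Chaining the three displays gives $\sum_{j \leq m} f(\Omega_j \mid \opt) \geq - f(\opt)$, and substituting this back into the first identity produces $\sum_{j \leq m} f(\Omega_j \cup \opt) \geq m\, f(\opt) - f(\opt) = (m-1) f(\opt)$, which is the claim.

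The only point requiring genuine care is that $f$ is non-monotone, so the individual marginals $f(\Omega_j \mid \opt)$ may be negative and cannot be bounded from below one at a time. The subadditivity step is precisely what lets me aggregate them into the union marginal so that non-negativity of $f$ is applied exactly once, to $\opt \cup \bigcup_{j} \Omega_j$, rather than term by term; beyond this observation the argument is a routine telescoping.
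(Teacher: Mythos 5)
Your overall skeleton is sound and is, in essence, the standard argument behind this fact (the paper itself gives no proof, citing Lemma 10 of \citet{DBLP:conf/colt/FeldmanHK17} instead): reducing to $\sum_{j\le m} f(\Omega_j\mid\opt)\ge -f(\opt)$, telescoping, and spending non-negativity exactly once on $f(\opt\cup\bigcup_j\Omega_j)$ is equivalent to inductively applying $f(A)+f(B)\ge f(A\cup B)+f(A\cap B)$ to $g(S)=f(S\cup\opt)$, whose pairwise intersections vanish by disjointness. However, there is one genuine error, and it sits precisely at the point you flag as unproblematic: the assertion that mutual disjointness of the $\Omega_j$ is ``convenient but not actually needed for this step.'' For a \emph{non-monotone} submodular $f$, the inequality $f(A\mid S)\ge f(A\mid T)$ for $S\subseteq T$ is \emph{not} a consequence of submodularity alone; it requires $A\cap(T\setminus S)=\emptyset$. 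Indeed, applying Definition \ref{def:submodular} to the two sets $A\cup S$ and $T$ gives
\begin{equation*}
f(A\cup S)+f(T)\;\ge\; f(A\cup T)+f\bigl((A\cap T)\cup S\bigr),
\end{equation*}
and the last term equals $f(S)$ --- which is what your step needs --- exactly when $A\cap T\subseteq S$. In your application, $A=\Omega_j$, $S=\opt$, $T=\opt\cup U_{j-1}$, so the needed condition is $\Omega_j\cap U_{j-1}=\emptyset$: this is literally the disjointness hypothesis, and it is the only place in the proof where it is used.

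Without it, both your intermediate step and the lemma itself are false. Take $V=\{a,b\}$ and let $f$ be the cut function of the single edge $\{a,b\}$, i.e.\ $f(\emptyset)=f(\{a,b\})=0$ and $f(\{a\})=f(\{b\})=1$; this $f$ is non-negative, submodular, and non-monotone. With $\opt=\{a\}$, $m=2$, and $\Omega_1=\Omega_2=\{b\}$ (not disjoint), the claimed monotonicity of marginals fails, since $f(\{b\}\mid\{a,b\})=0 > -1 = f(\{b\}\mid\{a\})$, and the conclusion of the lemma fails as well, since $\sum_{j\le 2} f(\Omega_j\cup\opt)=0<1=(m-1)f(\opt)$. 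So the chain of displays you wrote is valid, but only once you add the observation that disjointness guarantees $\Omega_j\cap(U_{j-1}\setminus\opt)=\emptyset$ and hence licenses each diminishing-returns inequality; the sentence claiming disjointness is unnecessary should be deleted, as that hypothesis is exactly what does the work in the non-monotone setting.
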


\begin{lemma}[Lemma 11 in \citet{DBLP:conf/colt/FeldmanHK17}]
\label{thm22}
Let $f \colon 2^V \rightarrow \mathbb{R}_{\geq 0}$ be a non-negative submodular function. For every three sets $A, B, C \subseteq V$ it holds $f(A \cup (B \cap C)) + f(B\setminus C) \geq f(A \cup B)$.
\end{lemma}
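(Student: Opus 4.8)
The plan is to derive the inequality from a single application of the submodularity axiom (Definition \ref{def:submodular}) together with the hypothesis that $f$ is non-negative, by feeding two carefully chosen sets into that axiom. The natural choice is $S \coloneqq A \cup (B \cap C)$ and $U \coloneqq B \setminus C$, since these are precisely the two arguments appearing on the left-hand side of the claimed inequality.

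First I would verify the two set identities that make this choice work. For the union, I would note that $(B \cap C) \cup (B \setminus C) = B$, so that $S \cup U = A \cup (B \cap C) \cup (B \setminus C) = A \cup B$, which matches the right-hand side. For the intersection, writing $B \setminus C = B \cap C^{c}$, the sets $B \cap C$ and $B \cap C^{c}$ are disjoint, so
\[
S \cap U = \bigl(A \cup (B \cap C)\bigr) \cap (B \cap C^{c}) = (A \cap B \cap C^{c}) \cup \bigl((B \cap C) \cap (B \cap C^{c})\bigr) = A \cap (B \setminus C).
\]

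With these identities in hand, submodularity yields $f(S) + f(U) \geq f(S \cup U) + f(S \cap U)$, that is,
\[
f\bigl(A \cup (B \cap C)\bigr) + f(B \setminus C) \geq f(A \cup B) + f\bigl(A \cap (B \setminus C)\bigr).
\]
The final step is simply to discard the last term: since $f$ takes only non-negative values, $f\bigl(A \cap (B \setminus C)\bigr) \geq 0$, and dropping it gives exactly the stated inequality.

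I do not anticipate any real obstacle here; the argument is one invocation of the submodularity axiom plus the hypothesis $f \geq 0$. The only point that requires care is the intersection computation, where one must observe that the cross term $(B \cap C) \cap (B \setminus C)$ vanishes, so that $S \cap U$ collapses to $A \cap (B \setminus C)$ rather than to a larger set; once that is noted, the rest is immediate.
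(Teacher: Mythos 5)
Your proof is correct: the paper itself does not prove this lemma but imports it verbatim (with citation) from \citet{DBLP:conf/colt/FeldmanHK17}, and your argument---applying submodularity to $S = A \cup (B \cap C)$ and $U = B \setminus C$, checking $S \cup U = A \cup B$ and $S \cap U = A \cap (B \setminus C)$, then dropping the intersection term by non-negativity---is exactly the standard proof given in that reference. No gaps; the set identities are verified correctly.
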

\subsection{Proof of Theorem \ref{thm1}}
Using Lemma \ref{thm23}-\ref{thm22} together with Lemma \ref{lemma00}, we can prove Theorem \ref{thm1}. 
\begin{proof}[Proof of Theorem \ref{thm1}] Fix an $m$-tuple of sets $\Omega_j$ for Algorithm \ref{alg:ksampl}, and consider the sets $\opt \setminus \Omega_j$, for all $j \leq m$. Note that it holds $V_j = V \setminus \cup_{i \leq j} \Omega_i$. Hence, $f(\opt \setminus V_i) = f(\opt \setminus (V \setminus \cup_{i \leq j} \Omega_i)) = f(\cup_{i \leq j} (\opt \cap \Omega_j ) ) \leq \sum_{i \leq j} f(\opt \cap \Omega_j)$, where the last inequality uses submodularity.

Unfixing the sets $\Omega_j$ and taking the expected value yields 
\begin{equation}
\label{eq:thm123}
    \expect{}{f(\opt \setminus V_i)} \leq \sum_{i \leq j} \expect{}{f(\opt \cap \Omega_j)},
\end{equation}
for all indices $j \leq m$. We have that it holds
\begin{align*}
    (m - 1) f(\opt) & \leq \sum_{j \leq m} \expect{}{f(\opt \cup \Omega_j)} \\
    & \leq \sum_{j \leq m}\expect{}{f(\Omega_j \cup (\opt \cap V_j))} + \sum_{j \leq m} \expect{}{f(\opt \setminus V_j)}\\
    & \leq m \frac{p + 1}{(1 - \varepsilon)^2} \expect{}{f(\Omega_j)} + m \frac{\lambda r}{(1 - \varepsilon)^2} \expect{}{f(\Omega_j)} + \sum_{j \leq m} \sum_{i \leq j} \expect{}{f(\opt \cap \Omega_j)}\\ 
    & \leq m (p + 1) \frac{1 + \varepsilon}{(1 - \varepsilon)^2} \expect{}{f(\Omega_j)} + \sum_{j \leq m} \sum_{i \leq j} \expect{}{f(\opt \cap \Omega_j)}\\ 
    & \leq m (p + 1) \frac{1 + \varepsilon}{(1 - \varepsilon)^2} \expect{}{f(\Omega_j)} + 4 \sum_{j \leq m} \sum_{i \leq j} \expect{}{f(\Lambda_j)},\\ 
    & \leq m (p + 1) \frac{1 + \varepsilon}{(1 - \varepsilon)^2} \expect{}{f(\Omega^*)} + 2 m (m - 1) \expect{}{f(\Omega^*)},
\end{align*}
where the first inequality follows from Lemma \ref{thm2}; the second inequality follows from Lemma \ref{thm22}; the third inequality follows from \eqref{eq:thm123} and Lemma~\ref{lemma00}; the fourth inequality follows from Lemma \ref{thm23}; the last inequality follows since $f(\Omega^*)$ is maximum over $f(\Omega_j)$ and $f(\Lambda_j)$. The claim follows by rearranging the inequality above.
\end{proof}
%
%
\section{Proof of Lemma \ref{lemma:runTime}}
\label{appendix:C}
We now prove an upper-bound on the run time for Algorithm \ref{alg:ksampl}.
\runTime*
\begin{proof}
First, note that Algorithm \ref{alg:rndseq} requires no function evaluation, and it always return a sequence $\{a_i\}_i$ of length at most $r$. 

At each step of Algorithm \ref{alg:samplplusplus}, the $\binarysearch$ sub-routine requires $\bigo{\log (r)}$ iterations, sine the set $J$ has size at most $r$. Each iteration of this sub-routine requires $\bigo{1}$ rounds of adaptivity, and $\bigo{\log (r)}$ function evaluations.

Note also that the while-loop, lines \ref{alg:sampl:beginWhile2}-\ref{alg:sampl:endWhile2} of Algorithm \ref{alg:samplplusplus} terminates after at most $\bigo{ \varepsilon^{-1}\log n }$ iterations. In fact, we have that $\absl{X} \leq n$, and that at each iteration the size of the new set $X$ decreases of a multiplicative factor of $(1 - \varepsilon)$. Similarly, the outer while-loop, lines \ref{alg:samplplusplus:beginWhile1}-\ref{alg:samplplusplus:endWhile1} of Algorithm \ref{alg:samplplusplus} terminates after at most $\bigo{ \varepsilon^{-1}\log (r/p\varepsilon) }$ iterations.

Hence, each call of Algorithm \ref{alg:samplplusplus} requires $\bigo{m \varepsilon^{-2} \log(r/(p\varepsilon ))\log (\varepsilon^{-1} \log r) \log n)}$ rounds of adaptivity. Similarly, since the $\binarysearch $ sub-routine requires $\bigo{n \log (r)}$ function evaluations, then the query complexity is $\bigo{m n \varepsilon^{-2} \log(r/(p\varepsilon ))\log (r) \log n)}$.
\end{proof}
%
%
\section{Proof of Lemma \ref{thm:specific}}
\label{appendix:E}
We perform the run time analysis for an optimal choice of the parameter $m$. We have that the following theorem holds.
\specific*
\begin{proof}
We start with the approximation guarantee. Denote with $\Omega^*$ an approximate solution found by Algorithm \ref{alg:ksampl}, and let $\opt$ be the optimal solution for Problem \ref{problem}. Then, from Theorem \ref{thm1} we get
\begin{equation*}
    f(\opt) \leq m \left ( \frac{(1 + \varepsilon) (p + 1)}{(1 - \varepsilon)^2 (m - 1)} + 2 \right ) \expect{}{f(\Omega^*)}
\end{equation*}
Substituting $m = 1 + \lceil \sqrt{(p + 1)/2} \rceil$ and rearranging yields
\begin{align*}
    f(\opt)  & \leq \frac{1 + \varepsilon}{(1 - \varepsilon)^2}\left ( p + 2 \left \lceil \sqrt{\frac{p + 1}{2}} \right \rceil + (p + 1) \left \lceil \sqrt{\frac{p + 1}{2}} \right \rceil^{-1} + 3 \right ) \expect{}{f(\Omega^*)} \\
    & \leq \frac{1 + \varepsilon}{(1 - \varepsilon)^2} \left ( p + 2 \left ( \sqrt{\frac{p + 1}{2}} + 1 \right ) + (p + 1) \left ( \sqrt{\frac{p + 1}{2}} \right )^{-1} + 3\right ) \expect{}{f(\Omega^*)} \\ 
    & = \frac{(1 + \varepsilon)(p  + 2 \sqrt{2 ( p + 1)} + 5)}{(1 - \varepsilon)^2} \expect{}{f(\Omega^*)}.
\end{align*}
Hence, the claim on the approximation guarantee follows. The upper-bounds on the adaptivity, depth, and total number of calls to the valuation oracle function follow directly from Lemma \ref{lemma:runTime}.
\end{proof}
%
%
\section{Proof of Lemma \ref{lemma:indOracle}}
\label{appendix:D}
In this section, we perform the run time analysis for Algorithm \ref{alg:ksampl} with respect to the calls to the independence oracle for the $p$-system. The following lemma holds.
\indOracle*
In order to prove this Lemma, we use the following well-known result.
\begin{theorem}[Theorem 6 in \citet{DBLP:journals/jcss/KarpUW88}]
\label{thm:Karp}
Algorithm \ref{alg:rndseq} terminates after $\mathcal{O}(\sqrt{r})$ steps. 
\end{theorem}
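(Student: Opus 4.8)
The plan is to reuse the adaptive structure already established in the proof of Lemma \ref{lemma:runTime} and to account only for the extra cost incurred by the independence oracle inside \rndseq{}. First I would pinpoint every place where Algorithm \ref{alg:ksampl} queries the independence oracle: the computation of $\eta$ and the refresh of $X$ in the while-loop of \rndseq{} (Algorithm \ref{alg:rndseq}, lines \ref{alg:sequence:alg}--\ref{alg:rndseq:Y1}), the formation of the sets $X_j$ inside the \binarysearch{} step of \asplusplus{}, and the recomputation of $X$ after $\delta$ decreases (Algorithm \ref{alg:samplplusplus}, line \ref{alg:sampl:defX}). The key observation is that, for a \emph{fixed} current solution, each such query is a family of membership tests $S \cup \{e\} \in \ind{}$ ranging independently over the $\bigo{n}$ candidate elements $e$; hence it can be issued in a single round of $\bigo{n}$ mutually independent oracle calls.

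Consequently the only genuinely sequential source of oracle adaptivity is the while-loop of \rndseq{}, whose successive iterations are coupled through the random sort and the shrinking set $X$. Here I would invoke Theorem \ref{thm:Karp}: each invocation of \rndseq{} terminates after $\bigo{\sqrt{r}} \subseteq \bigo{\sqrt{n}}$ iterations in expectation, and by the previous paragraph each iteration is one round of $\bigo{n}$ parallel oracle calls. Thus a single call of \rndseq{} costs $\bigo{\sqrt{n}}$ expected oracle rounds, whereas in the function-evaluation accounting of Lemma \ref{lemma:runTime} the same step was charged only $\bigo{1}$ adaptive rounds.

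To finish, I would multiply through. The adaptivity bound of Lemma \ref{lemma:runTime}, namely $\bigo{\frac{m}{\varepsilon^{2}}\log(\frac{r}{p\varepsilon})\log r \log n}$, already counts the logical rounds of \kas{}; charging each such round the worst-case $\bigo{\sqrt{n}}$ oracle rounds from the second paragraph multiplies the bound by $\bigo{\sqrt{n}}$ and yields the claimed $\bigo{\frac{m\sqrt{n}}{\varepsilon^{2}}\log(\frac{r}{p\varepsilon})\log r \log n}$ expected rounds of independent oracle calls. For the total number of calls I would observe that every oracle round consists of at most $\bigo{n}$ independence tests, so the total is the number of rounds times $\bigo{n}$, i.e. $\bigo{\frac{mn^{3/2}}{\varepsilon^{2}}\log(\frac{r}{p\varepsilon})\log r \log n}$.

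The step I expect to be the main obstacle is the bookkeeping of the first paragraph: one must argue cleanly that every independence query other than those inside the \rndseq{} while-loop can be batched into parallel rounds without changing the set the algorithm would otherwise compute, so that the adaptivity overhead relative to Lemma \ref{lemma:runTime} is exactly the $\bigo{\sqrt{n}}$ factor supplied by Theorem \ref{thm:Karp}, and nothing more. The expectation in the statement is inherited directly from the expected iteration count of Theorem \ref{thm:Karp}.
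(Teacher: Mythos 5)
Your proposal does not prove the statement it is supposed to prove. The statement is Theorem \ref{thm:Karp} itself: the claim that a single invocation of \rndseq{} (Algorithm \ref{alg:rndseq}) terminates after $\bigo{\sqrt{r}}$ iterations of its while-loop. Your argument \emph{invokes} exactly this theorem as a black box (``Here I would invoke Theorem \ref{thm:Karp}\dots'') and then multiplies its round count through the adaptivity bound of Lemma \ref{lemma:runTime}. What you end up establishing is Lemma \ref{lemma:indOracle} --- the oracle adaptivity and total query count of \kas{} --- and indeed your write-up matches the paper's Appendix \ref{appendix:D} proof of that lemma fairly closely, including the batching observation that each filtering step consists of $\bigo{n}$ mutually independent membership tests $S \cup \{e\} \in \ind{}$ issued in one parallel round, and the final multiplication by $\bigo{n}$ calls per round. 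But as a proof of Theorem \ref{thm:Karp} the argument is circular: the $\bigo{\sqrt{r}}$ (relaxed to $\bigo{\sqrt{n}}$) factor is precisely the content of the theorem, and nothing in your three paragraphs supplies it.

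For calibration: in the paper this theorem carries no proof at all --- it is imported verbatim as Theorem 6 of \citet{DBLP:journals/jcss/KarpUW88}, and its only role is as an ingredient in Lemma \ref{lemma:indOracle}. A self-contained proof would have to analyze the random process inside Algorithm \ref{alg:rndseq}: in each round the surviving candidate set $X$ is uniformly permuted, the longest feasible prefix $\{x_1,\dots,x_\eta\}$ is committed to $A$, and $X$ is re-filtered against $S \cup A$. The Karp--Upfal--Wigderson argument trades off two regimes --- when the feasible prefix is long, $\absl{A}$ (which can never exceed $r$) grows substantially, and when it is short, the random position of the first blocking element forces the re-filtering in line \ref{alg:rndseq:Y1} to discard a large fraction of $X$ in expectation --- and balancing the two regimes yields $\bigo{\sqrt{r}}$ expected rounds. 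None of that combinatorial content appears in your proposal; your (correct) parallelization bookkeeping concerns how the theorem is \emph{consumed} downstream, not why it is true.
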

Combining Theorem \ref{thm:Karp} with Lemma \ref{lemma:runTime}, we prove Lemma \ref{lemma:indOracle} as follows.
\begin{proof}
We first observed that the independence oracle for the $p$-system is called by Algorithm \ref{alg:rndseq}, and also by Algorithm \ref{alg:samplplusplus}. 

Since at each iteration of Algorithm \ref{alg:rndseq}, queries to the oracle for the $p$-system are independent, the from Theorem \ref{thm:Karp} it follows that Algorithm \ref{alg:rndseq} requires $\bigo{\sqrt{r}}$ rounds of independent calls to the oracle for the $p$-system. Furthermore, all calls to the independence oracle for the $p$-system in Algorithm \ref{alg:samplplusplus} are independent. Combining these observations with Lemma \ref{lemma:runTime} it follows that Algorithm \ref{alg:ksampl} requires $\bigo{m\sqrt{n} /\varepsilon^2 \log (n) \log (r/\varepsilon) }$ rounds of independent calls to the oracle for the $p$-system.

The claim follows, since at each round at most $\bigo{n}$ calls to the oracle for the $p$-system are executed in parallel.
\end{proof}
%
%
\section{Proof of Theorem \ref{thm:extendible}}
\label{appendix:G}
\subsection{Preliminary Results.} 
In order to prove Theorem \ref{thm:extendible} we use the following well-known result.
\begin{lemma}[Lemma 2.2 in \citet{DBLP:conf/soda/BuchbinderFNS14}]
\label{thm23sdf}
Let $\Omega \subseteq V$ be a set such that each element appears in $\Omega$ with probability at most $k$. Then it holds $\expect{}{f(U)} \geq (1 - k) f(\emptyset )$.
\end{lemma}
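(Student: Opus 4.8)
The plan is to prove the bound by routing through the \emph{Lov\'asz extension} of $f$, which lets me absorb all the (possibly correlated) inclusion events in a single step. Write $x_i = \pr{}{e_i \in \Omega} \le k$ for the marginal probability of each element $e_i$, and recall that the Lov\'asz extension can be written as $\hat f(x) = \int_0^1 f(\{e_i : x_i \ge t\})\,dt$, i.e.\ as the expectation of $f$ over the nested ``threshold'' sets obtained from a single uniform level $t \in [0,1]$.

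The first step is to show that $\hat f(x)$ \emph{lower-bounds} $\expectt{f(\Omega)}$ for every random set $\Omega$ with marginals $x$. This is exactly the classical fact (Lov\'asz) that, for a submodular $f$ as in Definition \ref{def:submodular}, the Lov\'asz extension coincides with the convex closure $f^-(x) = \min\{\expectt{f(R)} : R \text{ has marginals } x\}$, the minimum ranging over all joint distributions (couplings) realizing the prescribed single-element marginals. Since our $\Omega$ is one such coupling, $\expectt{f(\Omega)} \ge f^-(x) = \hat f(x)$. This is the only place submodularity enters, and it is precisely what lets me avoid reasoning about the joint law of $\Omega$ directly.

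The second step is an elementary estimate on $\hat f(x)$ that invokes non-negativity. Because every coordinate obeys $x_i \le k$, the threshold set $\{e_i : x_i \ge t\}$ is empty for all $t \in (k,1]$, so the integrand equals $f(\emptyset)$ on a sub-interval of length $1-k$, while on $[0,k]$ it is some value $f(\cdot) \ge 0$. Hence
\[
\hat f(x) = \int_0^k f(\{e_i : x_i \ge t\})\,dt + \int_k^1 f(\emptyset)\,dt \ge (1-k)\,f(\emptyset),
\]
and combining with the first step yields $\expectt{f(\Omega)} \ge (1-k)\,f(\emptyset)$, as claimed. Here non-negativity of $f$ (assumed throughout; see the remark after Definition \ref{def:submodular}) is exactly what makes the $[0,k]$ contribution harmless.

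The main obstacle, and the reason the statement is subtle for \emph{non-monotone} $f$, is precisely the first step: a naive element-by-element induction fails, both because conditioning on one element can inflate the other marginals and because negative marginals $f(e \mid S) < 0$ prevent the correction terms from being signed. The convex-closure identity is the tool that side-steps both difficulties at once. As a consistency check, the independent-sampling special case can be verified by hand: for the multilinear extension $F$, the map $t \mapsto F(t\,x/k)$ is concave, since its second derivative is $\sum_{i \ne j}(x_i x_j/k^2)\,\partial_i\partial_j F \le 0$ by submodularity (the diagonal terms vanish by multilinearity), whence $F(x) \ge (1-k)F(\mathbf 0) + k\,F(x/k) \ge (1-k)\,f(\emptyset)$, recovering the bound in the independent case.
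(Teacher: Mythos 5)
Your proof is correct and takes essentially the same route as the proof the paper relies on: the paper states this lemma without proof, citing Lemma 2.2 of \citet{DBLP:conf/soda/BuchbinderFNS14}, whose argument is precisely yours---lower-bound $\expectt{f(\Omega)}$ by the Lov\'asz extension at the marginal vector (valid since for submodular $f$ the Lov\'asz extension coincides with the convex closure), then note that the threshold sets $\{e_i \colon x_i \geq t\}$ are empty for $t > k$, so non-negativity yields $\hat f(x) \geq (1-k) f(\emptyset)$. Your reading of the $U$ in the lemma statement as $\Omega$ correctly repairs a typo in the paper.
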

\subsection{Additional Lemmas.}
In this section, we prove the following theorem.
\mainthmext*
In order to prove this theorem, we introduce additional notation. First of all, since $m = 1$, we need not specify the index on the input search space $V_i$ and solution $\Omega_i$ of Algorithm \ref{alg:samplplusplus}, and we simply use the notation $V = V_1$ and $\Omega = \Omega_1$. Again we define $\absl{V} = n$. Furthermore, we define an ordering of the points $\{v_i \}_i = V$, with $v_i$ the $i$-th point sampled by Algorithm \ref{alg:samplplusplus} during run time. All points of $V$ that are not sampled during run time, are placed at the end of the sequence $\{v_i \}_i$ in random order. Furthermore, we define the sets $T_{i} = \{v_1, \dots, v_{i} \} \cap \Omega$, and we define the sequence $\{v^*_i\}$ as
\begin{equation*}
    v^*_i \coloneqq \max_{ \{ v \in V\setminus T_{i - 1} \colon T_{i - 1}  \cup v \in \ind{} \} } f(v \mid T_{i - 1}).
\end{equation*}
For each point $v \in V$, denote with $\mathcal{X}_v$ an indicator function such that $\mathcal{X}_v = 1$ if $v$ is sampled as part of a random feasible sequence $\{a_1, \dots, a_{\eta}\}$, and $\mathcal{X}_v = 0$ otherwise. We also consider a sequence $\{O_i\}_{i = 0}^n$ of sets $O_i \subseteq V$ defined recursively as
\begin{itemize}
    \item $O_0 \coloneqq \emptyset$;
    \item if $v_i \in \Omega$, then $O_i \subseteq \opt \setminus (T_{i - 1} \cup_{j = 0}^{i - 1} O_j)$ is a set of minimum size such that $(\opt  \setminus (\cup_{j = 0}^{i} O_j) )\cup (T_{i - 1}\cup v_i ) \in \ind{}$;
    \item if $v_i \notin \Omega$, $\mathcal{X}_{v_i} = 1$, and $v_i \in \opt \setminus (\cup_{j = 0}^{i - 1} O_j)$, then $O_i = \{v_i\}$;
    \item if $v_i \notin \Omega$ and $v_i \notin \opt \setminus (\cup_{j = 0}^{i - 1} O_j)$, or if $\mathcal{X}_{v_i} = 0$, then $O_i = \emptyset $.
\end{itemize}
Finally, we define the set $O \coloneqq (\opt  \setminus (\cup_{i = 0}^n O_i) )\cup T_n = (\opt  \setminus (\cup_{i = 0}^n O_i) )\cup \Omega$.  

Following this notation, we first prove the following lemma.
\begin{lemma}
\label{lemma:p_ext1}
Fix all random decisions of Algorithm \ref{alg:samplplusplus}. Then it holds
\begin{equation*}
    f(\Omega) + \absl{O\setminus \Omega}\delta_0 \geq f(\Omega \cup \opt ) - \sum_{i = 0}^n \absl{O_i \setminus \Omega}f(v^*_i \mid T_{i - 1} \}).
\end{equation*}
\end{lemma}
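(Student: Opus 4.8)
The plan is to establish this telescoping-style inequality by tracking how the sequence $\{O_i\}_i$ is constructed and exploiting the recursive definition of $O$. First I would unpack the key identity $O\setminus\Omega = \opt\setminus(\cup_{i=0}^n O_i)$ together with the partition structure: the sets $O_i$ are mutually disjoint subsets of $\opt$ (this is forced by the recursion, since each $O_i$ is drawn from $\opt\setminus(\cup_{j<i}O_j)$), so $\opt = (O\setminus\Omega)\cup(\cup_{i=0}^n O_i)$ up to the elements of $\opt$ already absorbed into $\Omega$. The goal is thus to measure $f(\Omega\cup\opt)$ against $f(\Omega)$ by adding the elements of $\opt\setminus\Omega$ one index at a time and bounding each marginal gain.

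The main technical step is to show that for each index $i$ with $v_i\in\Omega$, the elements removed into $O_i$ have bounded marginal value, controlled by $f(v^*_i\mid T_{i-1})$. Here I would lean on the $p$-extendible axiom: when $v_i$ is appended to $T_{i-1}$, feasibility forces us to discard a set $O_i\subseteq\opt\setminus(T_{i-1}\cup_{j<i}O_j)$ of size at most $p$, and since $v_i$ was actually selected by \asplusplus{} as a feasible point, its marginal contribution dominates $f(v^*_i\mid T_{i-1})$ by the greedy/sampling guarantee. The crucial inequality to extract is that each element diverted into $O_i\setminus\Omega$ can be ``charged'' a loss of at most $f(v^*_i\mid T_{i-1})$, which is precisely the subtracted sum on the right-hand side; meanwhile the remaining elements of $\opt\setminus\Omega$ (those landing in $O\setminus\Omega$) each contribute marginal value at most $\delta_0$, since by the stopping rule of \asplusplus{} every such surviving element has marginal gain below the final threshold $\delta_0$, giving the $\absl{O\setminus\Omega}\delta_0$ term. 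I would assemble these two bounds via a telescoping sum $f(\Omega\cup\opt)-f(\Omega)=\sum_i f(\text{added elements}\mid\text{prefix})$ and repeated use of submodularity to pass from marginals against prefixes $T_{i-1}$ to marginals against $\Omega$.

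The hardest part will be making the charging argument rigorous across the two distinct cases of the recursion ($v_i\in\Omega$ versus $v_i\notin\Omega$ with $\mathcal{X}_{v_i}=1$) while ensuring that every element of $\opt\setminus\Omega$ is accounted for exactly once, with no double-counting. In particular I would need to verify that the definition of $O$ as $(\opt\setminus\cup_i O_i)\cup\Omega$ correctly isolates the ``leftover'' optimal elements that are never discarded, and that these are exactly the ones whose marginals are bounded by $\delta_0$. A subtle point is that the $v^*_i$ term appears only for elements genuinely removed by the extendibility exchange, so the bookkeeping must cleanly separate extendibility losses (charged to $f(v^*_i\mid T_{i-1})$) from threshold losses (charged to $\delta_0$); getting the disjointness and the per-element charges to line up without slack is where the proof will require the most care.
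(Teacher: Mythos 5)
Your skeleton is the same as the paper's: split $\opt\setminus\Omega$ into the discarded elements $\cup_i(O_i\setminus\Omega)$ and the survivors $O\setminus\Omega$, telescope from $f(\Omega\cup\opt)$ down to $f(O)$ while charging each discarded element at most $f(v_i^*\mid T_{i-1})$ (using submodularity to replace the telescoping prefixes by $T_{i-1}$), then charge each survivor at most $\delta_0$ via the termination criterion of \asplusplus{}. The paper packages this as two displayed inequalities, $f(\Omega)+\absl{O\setminus\Omega}\delta_0\geq f(O)$ and $f(O)\geq f(\Omega\cup\opt)-\sum_i\absl{O_i\setminus\Omega}\,f(v_i^*\mid T_{i-1})$, whose combination is the lemma.

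However, your justification of the key charging step is wrong. You assert that ``since $v_i$ was actually selected by \asplusplus{} as a feasible point, its marginal contribution dominates $f(v_i^*\mid T_{i-1})$ by the greedy/sampling guarantee.'' This is backwards and unavailable: $v_i^*$ is by definition a feasible addition to $T_{i-1}$ attaining the \emph{maximum} marginal, so deterministically $f(v_i\mid T_{i-1})\leq f(v_i^*\mid T_{i-1})$; a reverse relation holds only in expectation (up to $(1-\varepsilon)$ factors, via Lemmas \ref{lemma2} and \ref{lemma20}) and is exactly the content of the \emph{next} lemma, Lemma \ref{lemma:p_ext2} --- it plays no role here, and indeed cannot, since the present lemma fixes all random decisions. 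Fortunately no such domination is needed: the charge of $f(v_i^*\mid T_{i-1})$ per discarded element is purely definitional once you verify that each $v\in O_i\setminus\Omega$ is itself a feasible addition to $T_{i-1}$. That feasibility check is the step you leave implicit, and it is where the recursion's invariant enters: $O_i\subseteq\opt\setminus\cup_{j\leq i-1}O_j$ and $(\opt\setminus\cup_{j\leq i-1}O_j)\cup T_{i-1}\in\ind{}$, so $T_{i-1}\cup v$ is a subset of an independent set, hence independent by downward-closedness. The same gap recurs in your $\delta_0$ charge: the stopping rule of \asplusplus{} only bounds marginals of elements that can be \emph{feasibly} added to $\Omega$, so you must first note that $O\in\ind{}$ (again from the invariant) and apply downward-closedness before concluding $f(v\mid\Omega)\leq\delta_0$ for every survivor $v$. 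Finally, the $p$-extendible size bound $\absl{O_i}\leq p$ that you invoke is also not needed in this lemma; it too belongs to Lemma \ref{lemma:p_ext2}.
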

\begin{proof}
First, we prove that it holds
\begin{equation}
\label{eq:p_ext1}
f(S) + \absl{O\setminus S}\delta_0 \geq f(O).
\end{equation}
To this end, note that since $(\opt  \setminus (\cup_{i = 0}^n O_i) )\cup \Omega \in \ind{}$, then it holds $\{v\} \cup \Omega \in \ind{}$ for all $v \in O\setminus \Omega$. Hence, by the termination criterion of Algorithm \ref{alg:samplplusplus}, we have that $f(v \mid \Omega ) \leq \delta_0$. 
\begin{equation*}
    f(O) \leq f(\Omega ) + \sum_{v \in O \setminus \Omega} f(v \mid \Omega) \leq f(\Omega) + \absl{O\setminus \Omega}\delta_0,
\end{equation*}
where we have used submodularity. Then \eqref{eq:p_ext1} follows.

Next, we prove that it holds
\begin{equation}
\label{eq:p_ext2}
    f(O) \geq f(\Omega \cup \opt ) - \sum_{i = 1}^n \absl{O_i \setminus \Omega }f(v_i^* \mid T_{i - 1}).
\end{equation}
Note that the claim of this lemma follows by combining \eqref{eq:p_ext1} and \eqref{eq:p_ext2}.

To prove \eqref{eq:p_ext2}, we first observe that the sets $O_i \setminus \Omega $ are mutually disjoint, and we can write $O = (\Omega \cup \opt ) \setminus (\cup_{i=1}^n (O_i \setminus \Omega))$. Using this equality, we have that
\begin{align*}
    f(O) & = f(\Omega \cup \opt ) - \sum_{i = 1}^n f(O_i \setminus \Omega \mid (\Omega \cup \opt ) \setminus (\cup_{j = 0}^{i - 1} (O_j \setminus \Omega))) \\
    & \geq f(\Omega \cup \opt ) - \sum_{i = 1}^n f(O_i \setminus \Omega \mid T_{i - 1} ) \\    
    & \geq f(\Omega \cup \opt ) - \sum_{i = 1}^n \sum_{v \in O \setminus \Omega} f(v \mid T_{i - 1} ),
\end{align*}
where the first equation is the telescopic sum, the second one uses submodularity, together with the fact that $T_{i - 1} \subseteq \Omega$, and the third one use submodularity again. Then \eqref{eq:p_ext2} follows from the definition of $v^*_i$.
\end{proof}
Next, using Lemma \ref{lemma:p_ext1} we prove the following result.
\begin{lemma}
\label{lemma:p_ext2}
It holds
\begin{equation*}
    \expect{v_i}{\absl{O_i \setminus \Omega } f(v^*_i \mid T_{i - 1} ) } \leq (1 -  \varepsilon )^2 \frac{p}{p + 1}\expect{v_i}{\mathcal{X}_{v_i}f(v_i \mid T_{i - 1} )}.
\end{equation*}
\end{lemma}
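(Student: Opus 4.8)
The plan is to condition on all random decisions of Algorithm~\ref{alg:samplplusplus} made before the $i$-th point is drawn, so that $T_{i-1}$ — and hence the quantity $f(v_i^* \mid T_{i-1})$, which depends only on $T_{i-1}$ — becomes a constant that can be pulled out of $\expect{v_i}{\cdot}$. The only remaining randomness is the choice of the $i$-th sampled point together with the independent coin of the \unif{} sub-routine (keep-probability $\varphi_1 = (p+1)^{-1}$) deciding whether that point enters $\Omega$. I would then bound $\expect{v_i}{\absl{O_i \setminus \Omega}}$ by splitting according to the three cases in the recursive definition of $O_i$, and control $f(v_i^* \mid T_{i-1})$ separately through the current threshold $\delta$.

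The combinatorial heart of the argument is the size bound $\absl{O_i} \le p$ in the case $v_i \in \Omega$, which is exactly where $p$-extendibility enters. Writing $\tilde{\Omega} = (\opt \setminus \bigcup_{j<i} O_j) \cup T_{i-1} \in \ind{}$ for the feasible witness set maintained by the recursion and applying the defining axiom of a $p$-extendible system to $T_{i-1} \subset \tilde{\Omega}$ and the new element $v_i$, there is a set of at most $p$ elements whose removal keeps $\tilde{\Omega} \cup \{v_i\}$ feasible; since $O_i$ is a minimum such set, $\absl{O_i} \le p$. By downward-closure every displaced element $o \in O_i \setminus \Omega$ is a feasible greedy candidate at step $i-1$, so $f(o \mid T_{i-1}) \le f(v_i^* \mid T_{i-1})$, which justifies replacing each displaced element's marginal by the single quantity $f(v_i^* \mid T_{i-1})$. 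In the remaining (dropped) case $O_i$ is either empty or the singleton $\{v_i\}$ and contributes at most $1$ to $\absl{O_i \setminus \Omega}$.

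To convert $f(v_i^* \mid T_{i-1})$ into the realized marginals on the right-hand side, I would combine Lemma~\ref{lemma2}, which gives $f(v_i^* \mid T_{i-1}) \le \delta/(1-\varepsilon)$, with Lemma~\ref{lemma20}, which gives $\expect{v_i}{f(v_i \mid T_{i-1})} \ge (1-\varepsilon)\delta$ for a sampled point; together these account for the factor $(1-\varepsilon)^2$. The factor $\tfrac{p}{p+1}$ then emerges from the keep/drop split governed by $\varphi_1 = (p+1)^{-1}$: a point is kept — and only then may it displace up to $p$ opt-elements — with probability $\tfrac{1}{p+1}$, while its marginal is charged to the right-hand side through the indicator $\mathcal{X}_{v_i}$ whenever it is sampled. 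Assembling these estimates and taking the expectation over $v_i$ yields the claim.

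The step I expect to be the main obstacle is the bookkeeping that makes the constant come out to exactly $(1-\varepsilon)^2 \tfrac{p}{p+1}$: one must verify that the sets $O_i \setminus \Omega$ charged in the different cases are disjoint and charged against distinct realized marginals, so that the per-point estimate $\absl{O_i} \le p$ weighted by $\varphi_1$ combines cleanly with the value estimate rather than accumulating a spurious constant. The non-monotonicity of $f$, and the fact that the marginal $f(v_i \mid T_{i-1})$ is counted on the right whether or not $v_i$ is actually kept, are the two places where this accounting is most delicate.
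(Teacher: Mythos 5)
Your proposal follows essentially the same route as the paper's own proof: condition on $\mathcal{X}_{v_i}$, split on whether the \unif{} coin with $\varphi_1 = (p+1)^{-1}$ keeps $v_i$ (probability $\tfrac{1}{p+1}$, in which case $p$-extendibility gives $\absl{O_i} \leq p$, and you even spell out the witness-set argument that the paper merely asserts) or drops it (probability $\tfrac{p}{p+1}$, in which case $\absl{O_i} \leq 1$), and then convert $f(v_i^* \mid T_{i-1})$ into the realized marginal $f(v_i \mid T_{i-1})$ via Lemma \ref{lemma2} and Lemma \ref{lemma20}. Your closing worry about the constants is well-founded but is a feature of the paper itself rather than of your approach: combining $f(v_i^* \mid T_{i-1}) \leq \delta/(1-\varepsilon)$ with $\expect{v_i}{f(v_i \mid T_{i-1})} \geq (1-\varepsilon)\delta$ produces the factor $(1-\varepsilon)^{-2}$ on the right-hand side, not $(1-\varepsilon)^{2}$ as stated in the lemma, and this weaker $(1-\varepsilon)^{-2}$ version is exactly what the paper's own proof delivers and what the downstream proof of Theorem \ref{thm:extendible} actually uses, so the stated factor appears to be a typo.
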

\begin{proof}
We first observe that if $\mathcal{X}_i = 0$, then the claim holds since $\absl{O_i \setminus \Omega} = \emptyset$. Hence, we prove the claim by conditioning on the event $\{\mathcal{X}_{v_i} = 1 \}$. In this case, the point $v_i$ is added to $\Omega$ w.p. $(p + 1)^{-1}$. 

If $v_i$ is not added to the current solution, then $\absl{O_i} \leq 1$, and conditioning on the event $\{v_i \notin \Omega \}$ we get
\begin{align*}
    \expect{a_i}{\absl{O_i \setminus \Omega } f(v^*_i \mid T_{i - 1} ) \mid \mathcal{X}_{v_i} = 1, v_i \notin \Omega } & \leq  \expect{a_i}{\mathcal{X}_{v_i}f(v^*_i \mid T_{i - 1} ) \mid \mathcal{X}_{v_i} = 1, v_i \notin \Omega } \\
    & \leq \frac{p}{p + 1} \expect{a_i}{\mathcal{X}_{v_i} f(v_i^* \mid T_{i - 1}) \mid \mathcal{X}_{v_i} = 1},
\end{align*}
where we have used that $\pr{}{a_i \notin S} = 1 - (p + 1)^{-1}$. 
Furthermore, if the solution $v_i $ is added to $\Omega$, then the set $O_i$ has size at most $\absl{O_i} \leq p $, since $\ind{}$ is a $p$-extendible system. Hence,
\begin{align*}
    \expect{a_i}{\absl{O_i \setminus \Omega } f(v^*_i \mid T_{i - 1} ) \mid \mathcal{X}_{v_i} = 1, v_i \in \Omega } & \leq  p \expect{a_i}{\mathcal{X}_{v_i}f(v^*_i \mid T_{i - 1} ) \mid \mathcal{X}_{v_i} = 1, v_i \in \Omega } \\
    & \leq \frac{p}{p + 1} \expect{a_i}{\mathcal{X}_{v_i} f(v_i^* \mid T_{i - 1}) \mid \mathcal{X}_{v_i} = 1}.
\end{align*}
The claim follows combining the two chains of inequalities above, together with Lemma \ref{lemma2} and Lemma \ref{lemma20}.
\end{proof}
We also need the following lemma, to prove the main theorem. 
\begin{lemma}
\label{lemma:p_ext3}
It holds
\begin{equation*}
    \sum_{i = 1}^n \expect{}{\absl{O_i \setminus \Omega } f(v^*_i \mid T_{i - 1} ) } \leq p\expect{}{f(\Omega)}.
\end{equation*}
\end{lemma}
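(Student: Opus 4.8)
\textbf{Proof proposal for Lemma \ref{lemma:p_ext3}.}

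The plan is to combine the two preceding lemmas with the fact that the total marginal contribution of the sampled points telescopes into $f(\Omega)$. First I would invoke Lemma \ref{lemma:p_ext2}, which bounds each term of the left-hand sum by a constant multiple of the expected marginal contribution of the sampled point, namely
\begin{equation*}
    \expect{v_i}{\absl{O_i \setminus \Omega } f(v^*_i \mid T_{i - 1} ) } \leq (1 - \varepsilon)^2 \frac{p}{p + 1}\expect{v_i}{\mathcal{X}_{v_i}f(v_i \mid T_{i - 1} )}.
\end{equation*}
Summing this over all indices $i$ and using linearity of expectation reduces the claim to controlling $\sum_{i = 1}^n \expect{}{\mathcal{X}_{v_i} f(v_i \mid T_{i - 1})}$ by $(p+1)(1-\varepsilon)^{-2}\expect{}{f(\Omega)}$, after absorbing the constant $p/(p+1)$.

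The key observation is that the indicator $\mathcal{X}_{v_i}$ is exactly $1$ on those points sampled into a feasible random sequence, and that for such points $f(v_i \mid T_{i-1})$ is the marginal contribution measured against the current partial solution $T_{i-1} = \{v_1,\dots,v_{i-1}\} \cap \Omega$. Restricting attention to the points actually added to $\Omega$ (the remaining sampled points contribute with the right conditional structure to telescope), the sum $\sum_{v_i \in \Omega} f(v_i \mid T_{i-1})$ is a telescoping sum whose total equals $f(\Omega) - f(\emptyset) \leq f(\Omega)$. I would therefore show that, in expectation, the full sum over sampled points collapses to at most $\expect{}{f(\Omega)}$, using that the ordering $\{v_i\}_i$ is defined so that each $v_i \in \Omega$ is added precisely against $T_{i-1}$, and that non-negativity of $f$ together with submodularity makes the contributions of sampled-but-not-added points bounded by the corresponding realized increments.

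The main obstacle will be handling the conditioning on the sampling events cleanly: the indicator $\mathcal{X}_{v_i}$ and the event $\{v_i \in \Omega\}$ are correlated with the random ordering and with $T_{i-1}$, so the telescoping identity must be argued at the level of fixed random decisions (as in Lemma \ref{lemma:p_ext1}) and only then averaged. Concretely, I would fix all random choices of Algorithm \ref{alg:samplplusplus}, establish the pathwise bound $\sum_{i=1}^n \mathcal{X}_{v_i} f(v_i \mid T_{i-1}) \leq \frac{p+1}{(1-\varepsilon)^2 p}\, f(\Omega)$ via the telescoping argument combined with Lemma \ref{lemma20} (which guarantees each realized increment dominates $(1-\varepsilon)\delta$ in conditional expectation), and then take expectations and apply Lemma \ref{lemma:p_ext2} term by term. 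Rearranging the resulting constant $(1-\varepsilon)^2 \frac{p}{p+1} \cdot \frac{p+1}{(1-\varepsilon)^2 p} = 1$ against the extra factor $p$ coming from the $p$-extendibility bound yields the stated inequality $\sum_i \expect{}{\absl{O_i\setminus\Omega}f(v_i^*\mid T_{i-1})} \leq p\expect{}{f(\Omega)}$.
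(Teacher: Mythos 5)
Your high-level skeleton — apply Lemma \ref{lemma:p_ext2} termwise, then control $\sum_{i} \expect{}{\mathcal{X}_{v_i} f(v_i \mid T_{i-1})}$ by a telescoping argument — is the same as the paper's, and the reduction you state in the middle (it suffices to bound this sum by $(p+1)(1-\varepsilon)^{-2}\expect{}{f(\Omega)}$) is indeed sufficient. But the concrete step you propose to prove it is false, and it is exactly where the content of the lemma lies. You claim the pathwise bound $\sum_{i=1}^n \mathcal{X}_{v_i} f(v_i \mid T_{i-1}) \leq \frac{p+1}{(1-\varepsilon)^2 p} f(\Omega)$ (and, earlier, that the expected sum over sampled points ``collapses to at most $\expect{}{f(\Omega)}$''). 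Neither holds. With $\varphi_1 = (p+1)^{-1}$, a sampled point ($\mathcal{X}_{v_i}=1$) is \emph{retained} in $\Omega$ only with probability $(p+1)^{-1}$, and the sampled-but-discarded points contribute to the left-hand side without telescoping into $f(\Omega)$. On the positive-probability event that the \unif coins reject every sampled point, one has $\Omega = \emptyset$ and $f(\Omega) = f(\emptyset)$ (possibly $0$), while the left-hand side is a sum of marginals each of which is at least $\delta_0 > 0$ by the definition of $X$; so no pathwise inequality of this form can hold. Two further slips compound this: Lemma \ref{lemma20} is a \emph{lower} bound in conditional expectation and cannot support a pathwise \emph{upper} bound; and the ``extra factor $p$ coming from the $p$-extendibility bound'' you invoke at the end is double counting, since $\absl{O_i} \leq p$ was already consumed inside the proof of Lemma \ref{lemma:p_ext2}.

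The missing idea is the one displayed equation in the paper's proof. Introduce the realized gain $\mathcal{G}_{v_i}$, equal to $f(v_i \mid T_{i-1})$ if $v_i$ is actually added to $\Omega$ and $0$ otherwise. Because the retention coin is flipped independently of the identity and marginal value of the sampled point, one has the \emph{in-expectation} identity $\expect{}{\mathcal{G}_{v_i}} = \frac{1}{p+1}\expect{}{\mathcal{X}_{v_i} f(v_i \mid T_{i-1})}$; this is where the factor $p+1$ enters, and it has no pathwise analogue. The telescoping is then applied only to the realized gains, where it \emph{does} hold pathwise: $\sum_i \mathcal{G}_{v_i} = f(\Omega) - f(\emptyset) \leq f(\Omega)$. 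Chaining the three ingredients gives
\begin{equation*}
\sum_{i = 1}^n \expect{}{\absl{O_i \setminus \Omega} f(v^*_i \mid T_{i-1})} \leq (1-\varepsilon)^2 \tfrac{p}{p+1} \sum_{i = 1}^n \expect{}{\mathcal{X}_{v_i} f(v_i \mid T_{i-1})} = (1-\varepsilon)^2 \, p \sum_{i = 1}^n \expect{}{\mathcal{G}_{v_i}} \leq p\, \expect{}{f(\Omega)}.
\end{equation*}
In short, the correct argument separates the pathwise telescoping (over retained points only) from the expectation-level factor $p+1$ (from the retention probability); your proposal conflates the two into a single pathwise claim that fails.
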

\begin{proof}
For each $v_i \in V$, let $\mathcal{G}_{v_i}$ be a random variable whose value is equal to in the increase in the value of the current solution when added to it. Note that if $v_i$ yields $\mathcal{X}_{v_i} = 0$, then $\mathcal{G}_{v_i} = 0$ because it cannot be added to the current solution. Then,
\begin{equation}
\label{eq:asd_lkj1}
    \expect{v_i}{\mathcal{G}_{v_i}} = \pr{}{v_i \in \Omega} \expect{v_i}{\mathcal{X}_{v_i} f(v_i \mid T_{i - 1} )} = \frac{1}{p + 1} \expect{a_i}{\mathcal{X}_{v_i} f(v_i \mid T_{i - 1})}.
\end{equation}
The claim follows using Lemma \ref{lemma:p_ext2}, and using the law of total probability and linearity of the expected value.
\end{proof}
\subsection{Proof of Theorem \ref{thm:extendible}.}
We now have all necessary tools to prove the main theorem.
\begin{proof}[Proof of Theorem \ref{thm:extendible}]
Combining Lemma \ref{lemma:p_ext1}, taking the expected value, and combining with Lemma \ref{lemma:p_ext3} we get
\begin{equation}
\label{eq:asdlkj2}
    \frac{p + 1}{(1 - \varepsilon)^2} \expect{}{f(\Omega)} + \expect{}{\absl{O\setminus \Omega}\delta_0} \geq \expect{}{f(\Omega \cup \opt )}.
\end{equation}
Denote with $\{a_i\}_i$ the points of $\Omega $ in the order that they were added to $\Omega$. From Lemma \ref{lemma20} and submodularity, we have that $\expect{a_i}{f(a_i \mid \{a_1, \dots, a_{i - 1} \})} \geq 0$, for all points $a_i$ added to $\Omega_j$. It follows that $\expect{}{f(\Omega_j)} \geq \expect{}{f(a_0)}$, with $a_0$ the first point added to $\Omega_j$. Hence, from the definition of $\delta_0$, and since $\absl{O \setminus \Omega} \leq r$ due to feasibility, we get $\varepsilon \expect{}{f(\Omega)} \geq \expect{}{\absl{O\setminus \Omega}\delta_0}$. Substituting in \eqref{eq:asdlkj2} we get 
\begin{equation*}
    \frac{(p + 1)(1 + \varepsilon)}{(1 - \varepsilon)^2} \expect{}{f(\Omega)} \geq \expect{}{f(\Omega \cup \opt )}.
\end{equation*}
To conclude the proof, we observe that the function $g(S) = f(S \cup \opt)$ is a submodular function. Since each element of $V$ appears in $\Omega$ w.p. at most $(p + 1)^{-1}$, then by Lemma \ref{thm23sdf} we get \begin{equation*}
\expect{}{f(\Omega \cup \opt )} = \expect{}{g(\Omega)} \geq \frac{p}{p + 1} g(\emptyset ) = \frac{p}{p + 1} f(\opt ).
\end{equation*}
The claim follows.
\end{proof}
\end{document}